\newtheorem{theorem}{Theorem}
\newtheorem{claim}[theorem]{Claim}
\DeclareSIUnit{\million}{\text{million}}
\DeclareMathAlphabet\mathbfcal{OMS}{cmsy}{b}{n}
\newcounter{mosNoteCounter}
\newcommand{\newterm}[1]{\textbf{\textit{#1}}}
\newrobustcmd{\profile}[1]{%
\raisebox{-.4 ex}{
\begin{tikzpicture}[scale=0.08]
    \draw[gray,step=1cm] (0,0) grid (4,4); 
    \draw[black] (0,0) rectangle (4,4); 
    
    \foreach \x/\y in {#1} {
        \fill[black] (\x, \y) rectangle ++(1,1);
    }
\end{tikzpicture}
}
}
\newrobustcmd{\wmA}{\profile{0/3}}
\newrobustcmd{\wmB}{\profile{1/2}}
\newrobustcmd{\wmC}{\profile{2/1}}
\newrobustcmd{\wmAB}{\profile{0/3,1/2,1/3,0/2}}
\newrobustcmd{\wmBC}{\profile{1/1,2/2,1/2,2/1}}
\newrobustcmd{\wmAC}{\profile{0/3,2/1,2/3,0/1}}
\newrobustcmd{\wmABC}{\profile{1/1,2/2,0/1,0/2,1/2,2/1,0/3,1/3,2/3}}
\title{Co-Learning Empirical Games and World Models}
\author{%
Max Olan Smith\\
University of Michigan\\
\texttt{mxsmith@umich.edu} \\
\And
Michael P. Wellman \\
University of Michigan \\
\texttt{wellman@umich.edu} \\
}
\begin{document}

\maketitle

\begin{abstract}
Game-based decision-making involves reasoning over both world dynamics and strategic interactions among the agents.
Typically, empirical models capturing these respective aspects are learned and used separately.
We investigate the potential gain from co-learning these elements: a world model for dynamics and an empirical game for strategic interactions.
Empirical games drive world models toward a broader consideration of possible game dynamics induced by a diversity of strategy profiles. 
Conversely, world models guide empirical games to efficiently discover new strategies through planning. 
We demonstrate these benefits first independently, then in combination as realized by a new algorithm, Dyna-PSRO, that co-learns an empirical game and a world model.
When compared to PSRO---a baseline empirical-game building algorithm, Dyna-PSRO is found to compute lower regret solutions on partially observable general-sum games.
In our experiments, Dyna-PSRO also requires substantially fewer experiences than PSRO, a key algorithmic advantage for settings where collecting player-game interaction data is a cost-limiting factor.
\end{abstract}

\section{Introduction}
\label{sec:introduction}
Even seemingly simple games can actually embody a level of complexity rendering them intractable to direct reasoning.
This complexity stems from the interplay of two sources: dynamics of the game environment, and strategic interactions among the game's players.
As an alternative to direct reasoning, models have been developed to facilitate reasoning over these distinct aspects of the game.
\newterm{Empirical games} capture strategic interactions in the form of payoff estimates for joint policies~\cite{wellman06egta}.
\newterm{World models} represent a game's transition dynamics and reward signal directly~\cite{sutton18book,ha18worldmodels}.
Whereas each of these forms of model have been found useful for game reasoning, typical use in prior work has focused on one or the other, learned and employed in isolation from its natural counterpart.

Co-learning both models presents an opportunity to leverage their complementary strengths as a means to improve each other.
World models predict successor states and rewards given a game's current state and action(s).
However, their performance depends on coverage of their training data, which is limited by the range of strategies considered during learning.
Empirical games can inform training of world models by suggesting a diverse set of salient strategies, based on game-theoretic reasoning~\cite{wellman06egta}.
These strategies can expose the world model to a broader range of relevant dynamics.
Moreover, as empirical games are estimated through simulation of strategy profiles, this same simulation data can be reused as training data for the world model. 

Strategic diversity through empirical games, however, comes at a cost.
In the popular framework of Policy-Space Response Oracles (PSRO)~\cite{lanctot17psro}, empirical normal-form game models are built iteratively, at each step expanding a restricted strategy set by computing best-response policies to the current game's solution.
As computing an exact best-response is generally intractable, PSRO uses Deep Reinforcement Learning (DRL) to compute approximate response policies.
However, each application of DRL can be considerably resource-intensive, necessitating the generation of a vast amount of gameplays for learning. 
Whether gameplays, or experiences, are generated via simulation~\cite{obando2021revisitingrainbow} or from real-world interactions~\cite{hester2012texplore}, their collection poses a major limiting factor in DRL and by extension PSRO.
World models present one avenue to reduce this cost by transferring previously learned game dynamics across response computations.

\begin{wrapfigure}{r}{0.5\textwidth}
  \begin{center}
    \includegraphics[width=0.5\textwidth]{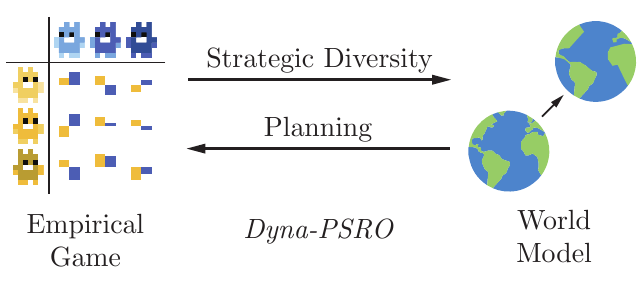}
  \end{center}
  \caption{%
      Dyna-PSRO co-learns a world model and empirical game.
      Empirical games offer world models strategically diverse game dynamics.
      World models offer empirical games more efficient strategy discovery through planning.
  } 
  \label{fig:teaser}
\end{wrapfigure}

We investigate the mutual benefits of co-learning a world model and an empirical game by first verifying the potential contributions of each component independently.
We then show how to realize the combined effects in a new algorithm, \emph{Dyna-PSRO}, that co-learns a world model and an empirical game (illustrated in Figure~\ref{fig:teaser}).
Dyna-PSRO extends PSRO to learn a world model concurrently with empirical game expansion, and applies this world model to reduce the computational cost of computing new policies.
This is implemented by a Dyna-based reinforcement learner~\cite{sutton90dyna, sutton91dyna} that integrates planning, acting, and learning in parallel.
Dyna-PSRO is evaluated against PSRO on a collection of partially observable general-sum games. 
In our experiments, Dyna-PSRO found lower-regret solutions while requiring substantially fewer cumulative experiences.

The main points of novelty of this paper are as follows:
(1)~empirically demonstrate that world models benefit from the strategic diversity induced by an empirical game;
(2)~empirically demonstrate that a world model can be effectively transferred and used in planning with new other-players. 
The major contribution of this work is a new algorithm, Dyna-PSRO, that co-learns an empirical game and world model finding a stronger solution at less cost than the baseline, PSRO.

\section{Related Work}
\label{sec:related-work}

\paragraph{Empirical Game Theoretic Analysis (EGTA).}
The core idea of EGTA~\cite{wellman06egta} is to reason over approximate game models (\textit{empirical games}) estimated by simulation over a restricted strategy set. 
This basic approach was first demonstrated by Walsh et al.~\cite{walsh02egta}, in a study of pricing and bidding games.
Phelps et al.~\cite{phelps06automatic} introduced the idea of extending a strategy set automatically through optimization, employing genetic search over a policy space.
Schvartzman \& Wellman~\cite{Schvartzman09} proposed using RL to derive new strategies that are approximate best responses (BRs) to the current empirical game's Nash equilibrium.
The general question of which strategies to add to an empirical game has been termed the \emph{strategy exploration problem}~\cite{Jordan10sw}.
PSRO~\cite{lanctot17psro} generalized the target for BR beyond NE, and introduced DRL for BR computation in empirical games.
Many further variants and extensions of EGTA have been proposed, for example those using structured game representations such as extensive-form~\cite{mcaleer21xdo, konicki22tree}.
Some prior work has considered transfer learning across BR computations in EGTA, specifically by reusing elements of policies and value functions~\cite{smith20qmixing, smith21}.

\paragraph{Model-Based Reinforcement Learning (MBRL).} 
\emph{Model-Based} RL algorithms construct or use a model of the environment (henceforth, \textit{world model}) in the process of learning a policy or value function~\cite{sutton18book}.
World models may either predict successor observations directly (e.g., at pixel level~\cite{wahlstrom15, watter15}), or in a learned latent space~\cite{ha18recurrentworldmodels, gelada19deepmdp}.
The world models can be either used for \emph{background planning} by rolling out model-predicted trajectories to train a policy, or by \emph{decision-time planning} where the world model is used to evaluate the current state by planning into the future.
Talvitie~\cite{talvitie14} demonstrated that even in small Markov decision processes (MDP)~\cite{puterman94mdp}, model-prediction errors tend to compound---rendering long-term planning at the abstraction of observations ineffective.
A follow-up study demonstrated that for imperfect models, short-term planning was no better than repeatedly training on previously collected real experiences; however, medium-term planning offered advantages even with an imperfect model~\cite{holland18}.
Parallel studies hypothesized that these errors are a result of insufficient data for that transition to be learned~\cite{kurutach18ensemble, buckman18ensemble}.
To remedy the data insufficiency, ensembles of world models were proposed to account for world model uncertainty~\cite{buckman18ensemble, kurutach18ensemble, yu20}, and another line of inquiry used world model uncertainty to guide exploration in state-action space~\cite{ball20, sekar20}.
This study extends this problem into the multiagent setting, where now other-agents may preclude transitions from occurring.
The proposed remedy is to leverage the strategy exploration process of building an empirical game to guide data generation.

\paragraph{Multiagent Reinforcement Learning (MARL).}
Previous research intersecting MARL and MBRL has primarily focused on modeling the opponent, particularly in scenarios where the opponent is fixed and well-defined.
Within specific game sub-classes, like cooperative games and two-player zero-sum games, it has been theoretically shown that opponent modeling reduces the sample complexity of RL~\cite{tian19, zhang20modelcomplexity}.
Opponent models can either explicitly~\cite{mealing15pokermodelling, foerster18lola} or implicitly~\cite{bard13, indarjo19} model the behavior of the opponent.
Additionally, these models can either construct a single model of opponent behavior, or learn a set of models~\cite{collins07, he16opponentmodeling}.
While opponent modeling details are beyond the scope of this study, readers can refer to Albrecht \& Stone's survey~\cite{albrecht18survey} for a comprehensive review on this subject.
Instead, we consider the case where the learner has explicit access to the opponent's policy during training, as is the case in empirical-game building.
A natural example is that of Self-Play, where all agents play the same policy; therefore, a world model can be learned used to evaluate the quality of actions with Monte-Carlo Tree Search~\cite{silver16alphago, silver2017alphazero, tesauro1995tdgammon, schrittwieser20muzero}.
Li et al.~\cite{zun23search} expands on this by building a population of candidate opponent policies through PSRO to augment the search procedure.
Krupnik et al.~\cite{krupnik20marlgenerative} demonstrated that a generative world model could be useful in multi-step opponent-action prediction.
Sun et al.~\cite{sun19} examined modeling stateful game dynamics from observations when the agents' policies are stationary.
Chockalingam et al.~\cite{chockalingam18} explored learning world models for homogeneous agents with a centralized controller in a cooperative game.
World models may also be shared by independent reinforcement learners in cooperative games~\cite{willemsen21mambpo, zhang22marco}.

\section{Co-Learning Benefits}
\label{sec:experiments}
We begin by specifying exactly what we mean by world model and empirical game.
This requires defining some primitive elements.
Let $t\in\mathcal{T}$ denote time in the real game, with $s^t\in\mathcal{S}$ the \newterm{information state} and $h^t\in\mathcal{H}$ the \newterm{game state} at time $t$.
The information state $s^t\equiv(m^{\pi,t}, o^t)$ is composed of the \newterm{agent's memory} $m^\pi\in\mathcal{M}^\pi$ , or recurrent state, and the current \newterm{observation} $o\in\mathcal{O}$.
Subscripts denote a player-specific component $s_i$, negative subscripts denote all but the player $s_{-i}$, and boldface denote the joint of all players $\bm{s}$.
The \newterm{transition dynamics} $p:\mathcal{H}\times\mathbfcal{A}\to\Delta(\mathcal{H})\times\Delta(\mathbfcal{R})$ define the game state update and reward signal.
The agent experiences \newterm{transitions}, or \newterm{experiences}, $(s^t,~a^t,r^{t+1},s^{t+1})$ of the game; where, sequences of transitions are called \newterm{trajectories}~$\tau$ and trajectories ending in a terminal game state are \newterm{episodes}.

At the start of an episode, all players sample their current \newterm{policy} $\pi$ from their \newterm{strategy} $\sigma:\Pi\to[0,1]$, where $\Pi$ is the \newterm{policy space} and $\Sigma$ is the corresponding \newterm{strategy space}. 
A \newterm{utility function} $U:\bm\Pi\to\mathbb{R}^n$ defines the payoffs/returns (i.e., cumulative reward) for each of $n$ players.
The tuple $\Gamma\equiv(\bm\Pi, U, n)$ defines a \newterm{normal-form game} (NFG) based on these elements.
We represent empirical games in normal form.
An \newterm{empirical normal-form game} (ENFG) $\hat\Gamma\equiv(\bm{\hat\Pi}, \hat{U}, n)$ models a game with a \newterm{restricted strategy set} $\bm{\hat\Pi}$ and an estimated payoff function $\hat{U}$.
An empirical game is typically built by alternating between game reasoning and strategy exploration.
During the game reasoning phase, the empirical game is solved based on a solution concept predefined by the modeler.
The strategy exploration step uses this solution to generate new policies to add to the empirical game.
One common heuristic is to generate new policies that best-respond to the current solution~\cite{mcmahan03do,schvartzman09exploration}.
As exact best-responses typically cannot be computed, RL or DRL are employed to derive approximate best-responses~\cite{lanctot17psro}.

An \newterm{agent world model} $w$ represents dynamics in terms of information available to the agent.
Specifically, $w$ maps information states and actions to observations and rewards,
$w:\mathbfcal{O}\times\mathbfcal{A}\times\mathcal{M}^w\to\mathbfcal{O}\times\mathbfcal{R}$, where $m^w\in\mathcal{M}^w$ is the \newterm{world model's memory}, or recurrent state.
For simplicity, in this work, we assume the agent learns and uses a deterministic world model, irrespective of stochasticity that may be present in the true game.
Specific implementation details for this work are provided in Appendix~\ref{app:world-model}.

Until now, we have implicitly assumed the need for distinct models. 
However, if a single model could serve both functions, co-learning two separate models would not be needed. 
Empirical games, in general, cannot replace a world model as they entirely abstract away any concept of game dynamics. 
Conversely, world models have the potential to substitute for the payoff estimations in empirical games by estimating payoffs as rollouts with the world model.
We explore this possibility in an auxiliary experiment included in Appendix~\ref{app:results:world-model-enfg}, but our findings indicate that this substitution is impractical. 
Due to compounding of model-prediction errors, the payoff estimates and entailed game solutions were quite inaccurate.

Having defined the models and established the need for their separate instantiations, we can proceed to evaluate the claims of beneficial co-learning. 
Our first experiment shows that the strategic diversity embodied in an empirical game yields diverse game dynamics, resulting in the training of a more performant world model. 
The second set of experiments demonstrates that a world model can help reduce the computational cost of policy construction in an empirical game.

\subsection{Strategic Diversity}
\label{sec:diversity}
A world model is trained to predict successor observations and rewards, from the current observations and actions, using a supervised learning signal.
Ideally, the training data would cover all possible transitions.
This is not feasible, so instead draws are conventionally taken from a dataset generated from play of a \newterm{behavioral strategy}.
Performance of the world model is then measured against a \newterm{target strategy}.
Differences between the behavioral and target strategies present challenges in learning an effective world model.

We call the probability of drawing a state-action pair under some strategy its \newterm{reach probability}.
From this, we define a strategy's \newterm{strategic diversity} as the distribution induced from reach probabilities. across the full state-action space.
These terms allow us to observe two challenges for learning world models.
First, the diversity of the behavioral strategy ought to \emph{cover} the target strategy's diversity.
Otherwise, transitions will be absent from the training data.
It is possible to supplement coverage of the absent transitions if they can be generalized from covered data; however, this cannot be generally guaranteed. 
Second, the \emph{closer} the diversities are, the more accurate the learning objective will be.
An extended formal argument of these challenges is provided in Appendix~\ref{app:diversity}.

If the target strategy were known, we could readily construct the ideal training data for the world model.
However the target is generally not known at the outset; indeed determining this target is the ultimate purpose of empirical game reasoning.
The evolving empirical game essentially reflects a search for the target.
Serendipitously, construction of this empirical game entails generation of data that captures elements of likely targets.
This data can be reused for world model training without incurring any additional data collection cost.

\paragraph{Game.}
We evaluate the claims of independent co-learning benefits within the context of a \textit{commons game} called ``Harvest''.
In Harvest, players move around an orchard picking apples. 
The challenging commons element is that apple regrowth rate is proportional to nearby apples, so that socially optimum behavior would entail managed harvesting. 
Self-interested agents capture only part of the benefit of optimal growth, thus non-cooperative equilibria tend to exhibit collective over-harvesting. 
The game has established roots in human-behavioral studies~\cite{janssen10commons} and in agent-based modeling of emergent behavior~\cite{perolat17commons, leibo17, leibo2021meltingpot}.
For our initial experiments, we use a symmetric two-player version of the game, where in-game entities are represented categorically~\cite{hcai19gathering}.
Each player has a $10\times10$ viewbox within their field of vision. 
The possible actions include moving in the four cardinal directions, rotating either way, tagging, or remaining idle.
A successful tag temporarily removes the other player from the game, but can only be done to other nearby players.
Players receive a reward of \num{1} for each apple picked.
More detailed information and visualizations are available in Appendix~\ref{app:games:harvest-categorical}.

\paragraph{Experiment.}
To test the effects of strategic diversity, we train a suite of world models that differ in the diversity of their training data.
The datasets are constructed from the play of three policies: a random baseline policy, and two PSRO-generated policies.
The PSRO policies were arbitrarily sampled from an approximate solution produced by a run of PSRO\@.
We sampled an additional policy from PSRO for evaluating the generalization capacity of the world models.
These policies are then subsampled and used to train seven world models.
The world models are referred to by icons \profile{}\! that depict the symmetric strategy profiles used to train them in the normal-form.
Strategy profiles included in the training data of the world models are shaded black. 
For instance, the first (random) policy \wmA\!\!, or the first and third policies \wmAC\!\!.
Each world model's dataset contains \num{1} million total transitions, collected uniformly from each distinct strategy profile (symmetric profiles are not re-sampled).
The world models are then evaluated on accuracy and recall for their predictions of both observation and reward for both players.
The world models are optimized with a weighted-average cross-entropy objective.
Additional details are in Appendix~\ref{app:world-model}.

\begin{figure}[!ht]
    \centering
    \includegraphics{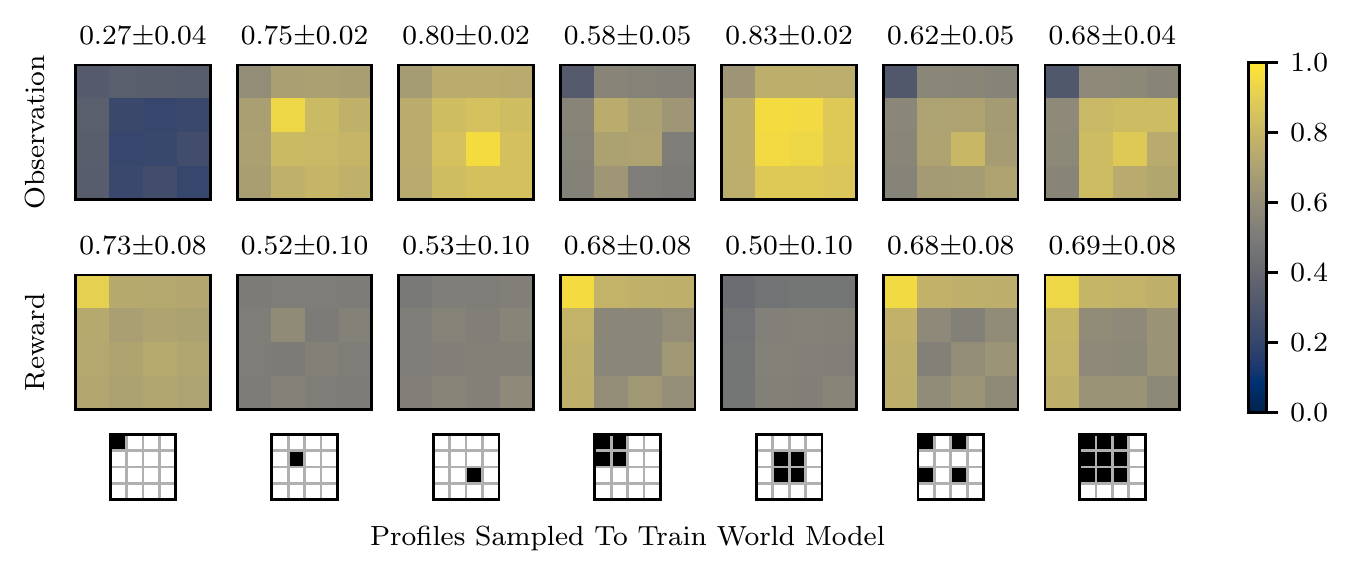}
    \caption{%
        World model accuracy across strategy profiles. 
        Each heatmap portrays a world model's accuracy over 16 strategy profiles.
        The meta x-axis corresponds to the profiles used to train the world model (as black cells).
        Above each heatmap is the model's average accuracy.
    }
    \label{fig:world-model-comparison-accuracy}
\end{figure}

\paragraph{Results.}
Figure~\ref{fig:world-model-comparison-accuracy} presents each world model's per-profile accuracy, as well as its average over all profiles.
Inclusion of the random policy corresponds to decreases in observation prediction accuracy: 
\mbox{\wmB\num{0.75\pm0.02} $\to$ \wmAB\num{0.58\pm0.05}}, 
\mbox{\wmC\num{0.80\pm0.02} $\to$ \wmAC\num{0.62\pm0.05}}, and 
\mbox{\wmBC\num{0.83\pm0.02} $\to$ \wmABC\num{0.68\pm0.04}}.
Figure~\ref{fig:gathering_recall}~(Appendix~\ref{app:results:diversity}) contains the world model's per-profile recall.
Inclusion of the random policy corresponds to increases in reward~\num{1} recall: 
\mbox{\wmB\num{0.25\pm0.07} $\to$ \wmAB\num{0.37\pm0.11}},
\mbox{\wmC\num{0.25\pm0.07} $\to$ \wmAC\num{0.36\pm0.11}}, and
\mbox{\wmBC\num{0.26\pm0.07} $\to$ \wmABC\num{0.37\pm0.11}}.

\paragraph{Discussion.}
The PSRO policies offer the most strategically salient view of the game's dynamics. 
Consequently, the world model \wmBC trained with these policies yields the highest observation accuracy.
However, this world model performs poorly on reward accuracy, scoring only \num{0.50\pm0.10}. 
In comparison, the model trained on the random policy \wmA scores \num{0.73\pm0.08}.
This seemingly counterintuitive result can be attributed to a significant class imbalance in rewards. 
\wmA predicts only the most common class, no reward, which gives the illusion of higher performance. 
In contrast, the remaining world models attempt to predict rewarding states, which reduces their overall accuracy.
Therefore, we should compare the world models based on their ability to recall rewards. 
When we examine \wmBC again, we find that it also struggles to recall rewards, scoring only \num{0.26\pm0.07}. 
However, when the random policy is included in the training data (\!\wmABC\!\!), the recall improves to \num{0.37\pm0.11}.
This improvement is also due to the same class imbalance. 
The PSRO policies are highly competitive, tending to over-harvest.
This limits the proportion of rewarding experiences.
Including the random policy enhances the diversity of rewards in this instance, as its coplayer can demonstrate successful harvesting.
Given the importance of accurately predicting both observations and rewards for effective planning, \wmABC appears to be the most promising option. 
However, the strong performance of \wmBC suggests future work on algorithms that can benefit solely from observation predictions. 
Overall, these results support the claim that strategic diversity enhances the training of world models.

\subsection{Response Calculations}
\label{sec:response-calculation}
Empirical games are built by iteratively calculating and incorporating responses to the current solution.
However, direct computation of these responses is often infeasible, so RL or DRL is used to approximate the response.
This process of approximating a single response policy using RL is computationally intensive, posing a significant constraint in empirical game modeling when executed repeatedly.
World models present an opportunity to address this issue.
A world model can serve as a medium for transferring previously learned knowledge about the game's dynamics.
Therefore, the dynamics need not be relearned, reducing the computational cost associated with response calculation. 

Exercising a world model for transfer is achieved through a process called \newterm{planning}.
Planning is any procedure that takes a world model and produces or improves a policy.
In the context of games, planning can optionally take into account the existence of coplayers.
This consideration can reduce experiential variance caused by unobserved confounders (i.e., the coplayers). 
However, coplayer modeling errors may introduce further errors in the planning procedure~\cite{he16opponentmodeling}.

Planning alongside empirical-game construction allows us to side-step this issue as we have direct access to the policies of all players during training. 
This allows us to circumvent the challenge of building accurate agent models. 
Instead, the policies of coplayers can be directly queried and used alongside a world model, leading to more accurate planning. 
In this section, we empirically demonstrate the effectiveness of two methods that decrease the cost of response calculation by integrating planning with a world model and other agent policies.

\subsubsection{Background Planning}
\label{sec:background-planning}
The first type of planning that is investigated is \newterm{background planning}, popularized by the Dyna architecture~\cite{sutton90dyna}.
In background planning, agents interact with the world model to produce \newterm{planned experiences}\footnote{%
Other names include ``imaginary'', ``simulated'', or ``hallucinated'' experiences.
}.
The planned experiences are then used by a model-free reinforcement learning algorithm as if they were \newterm{real experiences} (experiences generated from the real game).
Background planning enables learners to generate experiences of states they are not currently in.

\paragraph{Experiment.}
To assess whether planned experiences are effective for training a policy in the actual game, we compute two response policies. 
The first response policy, serving as our baseline, learns exclusively from real experiences. 
The second response policy, referred to as the planner, is trained using a two-step procedure. 
Initially, the planner is exclusively trained on planned experiences. 
After \num{10000} updates, it then transitions to learning solely from real experiences. 
Policies are trained using IMPALA~\cite{espeholt18impala}, with further details available in Appendix~\ref{app:policy}.
The planner employs the \wmABC world model from Section~\ref{sec:diversity}, and the opponent plays the previously held-out policy. 
In this and subsequent experiments, the cost of methods is measured by the number of experiences they require with the actual game. 
This is because, experience collection is often the bottleneck when applying RL-based methods~\cite{obando2021revisitingrainbow, hester2012texplore}. 
Throughout the remainder of this work, each experience represents a trajectory of \num{20} transitions, facilitating the training of recurrent policies.

\begin{figure}[!ht]
    \centering
    \includegraphics{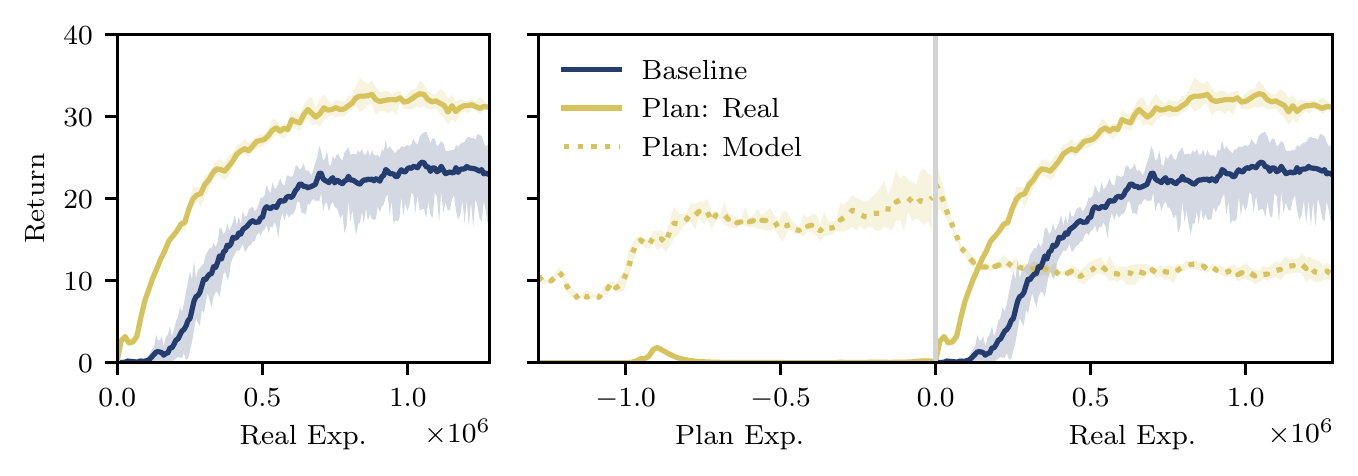}
    \caption{%
        Effects of background planning on response learning.
        Left: Return curves measured by the number of real experiences used.
        Right: Return curves measured by usage of both real and planned experiences.
        The planner's return is measured against the real game and the world model.
        (\num{5} seeds, with \SI{95}{\percent} bootstrapped CI).
    }
    \label{fig:background-planning}
\end{figure}

\paragraph{Results.}
Figure~\ref{fig:background-planning} presents the results of the background planning experiment. 
The methods are compared based on their final return, utilizing an equivalent amount of real experiences. 
The baseline yields a return of \num{23.00\pm4.01}, whereas the planner yields a return of \num{31.17\pm0.25}.

\paragraph{Discussion.}
In this experiment, the planner converges to a stronger policy, and makes earlier gains in performance than the baseline.
Despite this, there is a significant gap in the planner's learning curves, which are reported with respect to both the world model and real game.
This gap arises due to accumulated model-prediction errors, causing the trajectories to deviate from the true state space. 
Nevertheless, the planner effectively learns to interact with the world model during planning, and this behavior shows positive transfer into the real game, as evidenced by the planner's rapid learning.
The exact magnitude of benefit will vary across coplayers' policies, games, and world models.
In Figure~\ref{fig:background-planning-bad}~(Appendix~\ref{app:results:background-planning}), we repeat the same experiment with the poorly performing \wmA world model, and observe a marginal benefit (\num{26.05\pm1.32}).
The key take-away is that background planning tends to lead towards learning benefits, and not generally hamper learning.

\subsubsection{Decision-Time Planning}
\label{sec:decision-time-planning}
The second main way that a world model is used is to inform action selection at \newterm{decision time [planning] (DT)}.
In this case, the agent evaluates the quality of actions by comparing the value of the model's predicted successor state for all candidate actions.
Action evaluation can also occur recursively, allowing the agent to consider successor states further into the future. 
Overall, this process should enable the learner to select better actions earlier in training, thereby reducing the amount of experiences needed to compute a response.
A potential flaw with decision-time planning is that the agent's learned value function may not be well-defined on model-predicted successor states~\cite{talvitie14}.
To remedy this issue, the value function should also be trained on model-predicted states.

\paragraph{Experiment.}
To evaluate the impact the decision-time planning, we perform an experiment similar to the background planning experiment (Section~\ref{sec:background-planning}).
However, in this experiment, we evaluate the quality of four types of decision-time planners that perform one-step three-action search.
The planners differ in the their ablations of background planning types: 
(1)~\newterm{warm-start background planning (BG:~W)} learning from planned experiences before any real experiences, and 
(2)~\newterm{concurrent background planning (BG:~C)} where after BG:~W, learning proceeds simultaneously on both planned and real experiences.
The intuition behind BG:~C is that the agent can complement its learning process by incorporating planned experiences that align with its current behavior, offsetting the reliance on costly real experiences.
Extended experimental details are provided in Appendix~\ref{app:methods}.

\begin{figure}[!ht]
    \centering
    \includegraphics{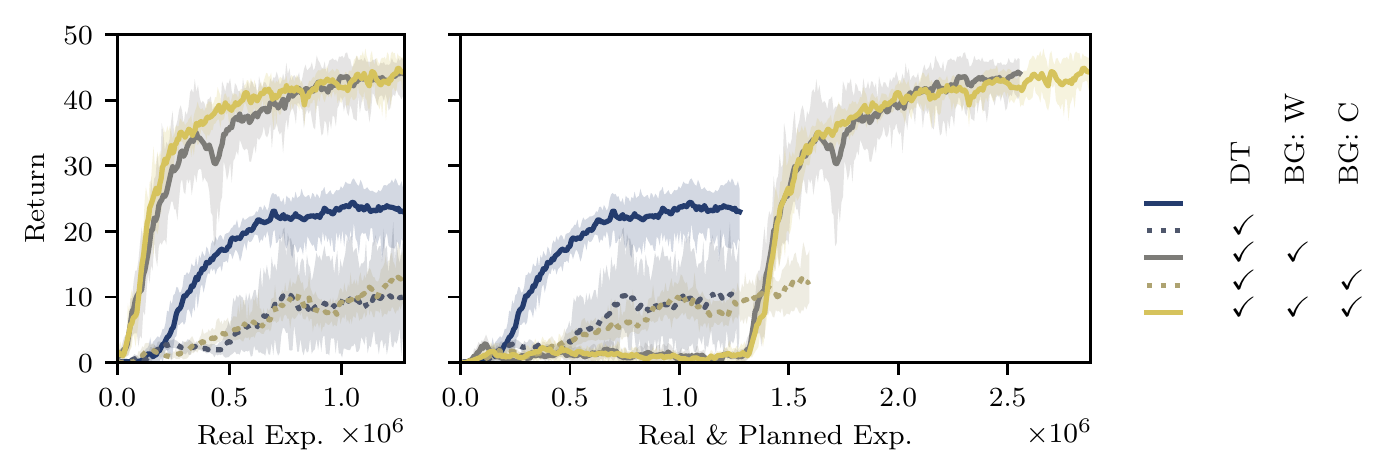}
    \caption{%
        Effects of decision-time planning on response learning.
        Four planners using decision-time planning (DT) are shown in combinations with warm-start background planning (BG:~W) and concurrent background planning (BG:~C).
        (\num{5} seeds, with \SI{95}{\percent} bootstrapped CI).
    }
    \label{fig:decision-time-planning}
\end{figure}

\paragraph{Results.}
The results for this experiment are shown in Figure~\ref{fig:decision-time-planning}.
The baseline policy receives a final return of \num{23.00\pm4.01}.
The planners that do not include BG:~W, perform worse, with final returns of \num{9.98\pm7.60} (DT) and \num{12.42\pm3.97} (DT \& BG:~C).
The planners that perform BG:~W outperform the baseline, with final returns of \num{44.11\pm2.81} (DT \& BG:~W) and \num{44.31\pm2.56} (DT, BG:~W, \& BG:~C).

\paragraph{Discussion.}
Our results suggest that the addition of BG:~W provides sizable benefits: \num{9.98\pm7.60} (DT) $\to$ \num{44.11\pm2.81} (DT \& BG:W) and \num{12.42\pm3.97} (DT \& BG:~C) $\to$ \num{44.31\pm2.56} (DT, BG:~W, \& BG:~C).
We postulate that this is because it informs the policy's value function on model-predictive states early into training.
This allows that the learner is able to more effectively search earlier into training.
BG:~C appears to offer minor stability and variance improvements throughout the training procedure; however, it does not have a measurable difference in final performance.
This result suggests using planning methods in combination to reap their respective advantages.

However, we caution against focusing on the magnitude of improvement found within this experiment.
As the margin of benefit depends on many factors including the world model accuracy, the opponent policy, and the game.
To exemplify, similar to the background planning section, we repeat the same experiment with the poorly performing \wmA world model.
The results of this ancillary experiment are in Figure~\ref{fig:decision-time-planning-bad} (Appendix~\ref{app:results:decision-time-planning}).
The trend of BG:~W providing benefits was reinforced: \num{6.29\pm5.12} (DT) $\to$ \num{20.98\pm9.76} (DT \& BG:~W) and \num{3.64\pm0.26} (DT \& BG:~C) $\to$ \num{33.07\pm7.67} (DT, BG:~W, \& BG:~C).
However, the addition of BG:~C now measurably improved performance \num{20.98\pm9.76} (DT \& BG:~W) $\to$ \num{33.07\pm7.67} (DT, BG:~W, \& BG:~C).
The main outcome of these experiments is the observation that multi-faceted planning is unlikely to harm a response calculation, and has a potentially large benefit when applied effectively.
These results support the claim that world models offer the potential to improve response calculation through decision-time planning.

\section{Dyna-PSRO}
\label{sec:dyna-psro}
In this section we introduce Dyna-PSRO, \emph{Dyna}-Policy-Space Response Oracles, an approximate game-solving algorithm that builds on the PSRO~\cite{lanctot17psro} framework. 
Dyna-PSRO employs co-learning to combine the  benefits of world models and empirical games.

Dyna-PSRO is defined by two significant alterations to the original PSRO algorithm.
First, it trains a world model in parallel with all the typical PSRO routines (i.e., game reasoning and response calculation).
We collect training data for the world model from both the episodes used to estimate the empirical game's payoffs, and the episodes that are generated during response learning and evaluation.
This approach ensures that the world model is informed by a diversity of data from a salient set of strategy profiles.
By reusing data from empirical game development, training the world model incurs no additional cost for data collection.

The second modification introduced by Dyna-PSRO pertains to the way response policies are learned.
Dyna-PSRO adopts a Dyna-based reinforcement learner~\cite{sutton90dyna, sutton91dyna, sutton12lineardyna} that integrates simultaneous planning, learning, and acting.
Consequently, the learner concurrently processes experiences generated from decision-time planning, background planning, and direct game interaction.
These experiences, regardless of their origin, are then learned from using the IMPALA~\cite{espeholt18impala} update rule.
For all accounts of planning, the learner uses the single world model that is trained within Dyna-PSRO.
This allows game knowledge accrued from previous response calculations to be transferred and used to reduce the cost of the current and future response calculations.
Pseudocode and additional details for both PSRO and Dyna-PSRO are provided in Appendix~\ref{app:dyna-psro}.

\paragraph{Games.}
Dyna-PSRO is evaluated on three games.
The first is the harvest commons game used in the experiments described above, denoted ``Harvest: Categorical''.
The other two games come from the MeltingPot~\cite{leibo2021meltingpot} evaluation suite and feature rich image-based observations.
``Harvest: RGB'' is their version of the same commons harvest game (details in Appendix~\ref{app:games:harvest-rgb}).
``Running With Scissors'' is a temporally extended version of rock-paper-scissors (details in Appendix~\ref{app:games:running-with-scissors}).
World model training and implementation details for each game are in Appendix~\ref{app:world-model}, likewise, policies in Appendix~\ref{app:policy}.

\paragraph{Experiment.}
Dyna-PSRO's performance is measured by the quality of the solution it produces when compared against the world-model-free baseline PSRO\@.
The two methods are evaluated on SumRegret (sometimes called \emph{Nash convergence}), which measures the regret across all players $\text{SumRegret}(\bm{\sigma}, \overline{\bm{\Pi}})=\sum_{i\in n}\max_{\pi_i\in\overline\Pi_i}\hat{U}_i(\pi_i, \sigma_{-i}) - \hat{U}_i(\sigma_i, \sigma_{-i})$, where $\bm{\sigma}$ is the method's solution and $\overline{\bm{\Pi}}\subseteq\bm{\Pi}$ denotes the deviation set. 
We define deviation sets based on policies generated across methods (i.e., regret is with respect to the \emph{combined game}): $\overline{\bm{\Pi}}\equiv\bigcup_\textup{method}\hat{\bm{\Pi}}^\textup{method}$, for all methods for a particular seed (detailed in Appendix~\ref{app:regret})~\cite{balduzzi2018reevaluating}.
We measure SumRegret for intermediate solutions, and report it as a function of the cumulative number of real experiences employed in the respective methods.

\begin{figure}[!ht]
    \centering
    \includegraphics{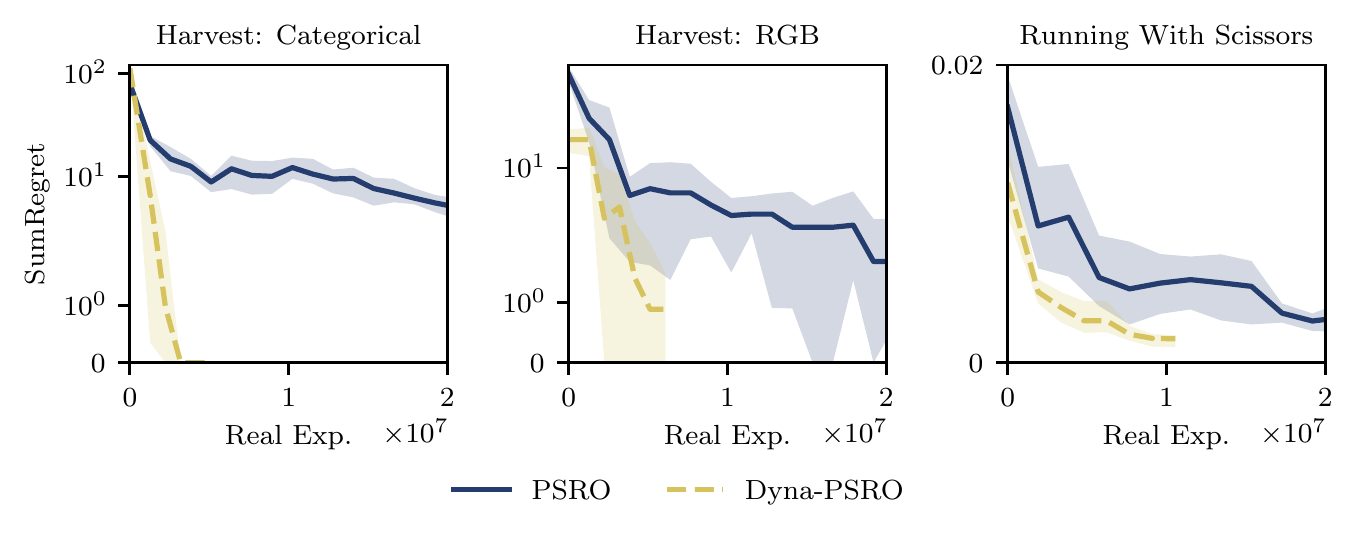}
    \caption{%
        PSRO compared against Dyna-PSRO. 
        (\num{5} seeds, with \SI{95}{\percent} bootstrapped CI).        
    }
    \label{fig:dyna-psro}
\end{figure}

\paragraph{Results.}
Figure~\ref{fig:dyna-psro} presents the results for this experiment.
For Harvest:~Categorical, Dyna-PSRO found a no regret solution within the combined-game in \num{3.2e6} experiences.
Whereas, PSRO achieves a solution of at best \num{5.45\pm1.62} within \num{2e7} experiences.
In Harvest:~RGB, Dyna-PSRO reaches a solution with \num{0.89\pm0.74} regret at \num{5.12e6} experiences.
At the same time, PSRO had found a solution with \num{6.42\pm4.73} regret, and at the end of its run had \num{2.50\pm2.24} regret.
In the final game, RWS, Dyna-PSRO has $2\mathrm{e}{-3}\pm5\mathrm{e}{-4}$ regret at \num{1.06e7} experiences, and at a similar point (\num{9.6e6} experiences), PSRO has $6.68\mathrm{e}{-3}\pm2.51\mathrm{e}{-3}$.
At the end of the run, PSRO achieves a regret $3.50\mathrm{e}{-3}\pm7.36\mathrm{e}{-4}$.

\paragraph{Discussion.}
The results indicate that across all games, Dyna-PSRO consistently outperforms PSRO by achieving a superior solution. 
Furthermore, this improved performance is realized while consuming fewer real-game experiences. 
For instance, in the case of Harvest:~Categorical, the application of the world model for decision-time planning enables the computation of an effective policy after only a few iterations. 
On the other hand, we observe a trend of accruing marginal gains in other games, suggesting that the benefits are likely attributed to the transfer of knowledge about the game dynamics.
In Harvest:~Categorical and Running With Scissors, Dyna-PSRO also had lower variance than PSRO.

\section{Limitations}

Although our experiments demonstrate benefits for co-learning world models and empirical games, there are several areas for potential improvement. 
The world models used in this study necessitated observational data from all players for training, and assumed a simultaneous-action game. 
Future research could consider relaxing these assumptions to accommodate different interaction protocols, a larger number of players, and incomplete data perspectives.
Furthermore, our world models functioned directly on agent observations, which made them computationally costly to query.
If the generation of experiences is the major limiting factor, as assumed in this study, this approach is acceptable. 
Nevertheless, reducing computational demands through methods like latent world models presents a promising avenue for future research. 
Lastly, the evaluation of solution concepts could also be improved. 
While combined-game regret employs all available estimates in approximating regret, its inherent inaccuracies may lead to misinterpretations of relative performance.

\section{Conclusion}
This study showed the mutual benefit of co-learning a world model and empirical game.
First, we demonstrated that empirical games provide strategically diverse training data that could inform a more robust world model.
We then showed that world models can reduce the computational cost, measured in experiences, of response calculations through planning.
These two benefits were combined and realized in a new algorithm, Dyna-PSRO\@.
In our experiments, Dyna-PSRO computed lower-regret solutions than PSRO on several partially observable general-sum games.
Dyna-PSRO also required substantially fewer experiences than PSRO, a key algorithmic advantage for settings where collecting experiences is a cost-limiting factor.

\nocite{*}
\bibliographystyle{plain}
\bibliography{bibliography}

\newpage
\appendix
\section{Broader Impact}
\label{app:broader-impact}
There are no direct broader impacts from this work.
However, this work promotes the adoption of both empirical games and world models.
Potential negative impacts may arise due to errors introduced when compressing the true game into the confines of \emph{any} model, which could lead to negative consequences.
World model errors within Dyna-PSRO are transferred across response calculations potentially reinforcing biases about the world. 
If these biases are not rectified, they could negatively influence policies learned from these models.
The strategic diversity component of this work aims to mitigate these potential biases, though it represents only the initial step in addressing this concern. 
When considering empirical games, inaccuracies within them can lead to the suggestion of flawed solutions. 
The adoption of these inaccurate solutions could have negative repercussions for practitioners or other stakeholders involved in the game. 
Vigilance and thorough evaluation are required to prevent these potential issues.

\section{Compute}
\label{app:compute}
GPUs are used for training world models, and policies within Dyna-PSRO.
Two types of GPUs were used throughout this work interchangeably: TITAN~X and GTX~1080~Ti.
All other computation was completed using CPUs.
Each response calculation had additional CPUs corresponding to the number of experience generation arenas described in Appendix~\ref{app:methods}.
Experiments were run on internal clusters.

\section{Methods Details}
\label{app:methods}
In this work, the both the policies and world models are implemented in JAX~\cite{jax2018github} with Haiku~\cite{haiku2020github}.
The software is architected using Launchpad~\cite{yang2021launchpad} with design patterns inspired by ACME~\cite{hoffman2020acme}.
All replay buffers are implemented using Reverb~\cite{cassirer2021reverb}.
Gambit~\cite{gambit} is used as a game solver via linear complementarity~\cite{eaves71lcp}.

\subsection{Policy Implementation \& Training}
\label{app:policy}

\begin{figure}[ht]
    \centering
    \includegraphics{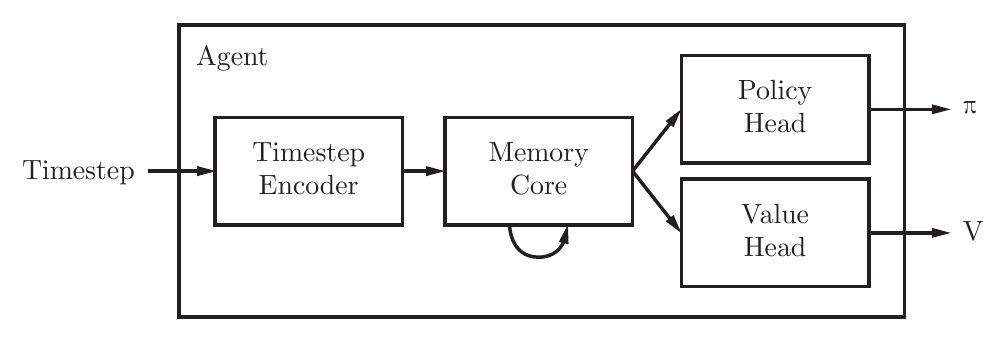}
    \caption{Agent Architecture.}
    \label{fig:agent-implementation}
\end{figure}

All policies follow the general architecture depicted in Figure~\ref{fig:agent-implementation}.
This consists of four modules:
\begin{itemize}
    \item\emph{Timestep Encoder}: 
    Processes all of the current observation's information into a single embedding vector. 
    The timestep includes the new observation and the policy's previous action.
    
    \item\emph{Memory Core}: The component of the agent that maintains and update's the agent's memory.
    
    \item\emph{Policy Head}: Computes the agent's policy.
    
    \item\emph{Value Head}: Computes the agent's state value function.
\end{itemize}
All of the components are simultaneously trained and their joint parameters are $\theta^\pi\in\Theta^\pi$.
The policies are trained using the IMPALA algorithm~\cite{espeholt18impala}.
For the IMPALA loss, the coefficients for each component loss are:
\begin{equation*}
    \mathcal{L}_\textup{IMPALA} = \lambda_\pi \cdot \mathcal{L}_\pi + \lambda_\textup{V}\cdot \mathcal{L}_\textup{V} + \lambda_\textup{entropy}\cdot \mathcal{L}_\textup{entropy},
\end{equation*}
with a discount factor of \num{0.99}.
The training details for each specific response calculation are itemized below.

\paragraph{Baseline Parameters}
The learning rate begins is linearly decayed over \num{10000} updates.
Each update is computed from a mini-batch of \num{128} examples that are generated from \num{8} arenas\footnote{%
The term \emph{arena} is used to refer to an experience generation process. 
This is more commonly referred to as an ``actor''; however, this terminology may be confounding with language in RL, Dyna, or multiagent learning. 
}
Policy parameters are synchronized at the beginning of each episode.
Each example in the mini-batch is a sequence of \num{20} transitions.
Moreover, sequences are stored in a replay buffer with a period of \num{19}, to ensure that the action played at the end of a sequence is trained.
Sequences are stored in a replay buffer with a max capacity of \num{1000000}, and are evicted once sampled.
Additional hyperparameters are specified in Table~\ref{tab:hyperparameters}.

\begin{table}[htbp]
    \centering
    \caption{%
        Baseline policy hyperparameters per game.
    }      
    \label{tab:hyperparameters}
    \begin{tabular}{lrrr}
        \toprule
         Hyperparameter &Harvest:~Categorical &Harvest:~RGB &Running with Scissors \\
         \midrule
         Optimizer &Adam~\cite{kimga14adam} &RMSProp~\cite{hinton18rmsprop} &RMSProp~\cite{hinton18rmsprop} \\
         $\lambda_\pi$ &\num{1.0} &\num{1.0} &\num{1.0} \\
         $\lambda_\textup{V}$ &\num{0.2} &\num{0.5} &\num{0.2} \\
         $\lambda_\textup{entropy}$ &\num{0.04} &\num{0.01} & \num{0.003}\\
         Learning Rate Start &\num{6e-6} &\num{6e-4} &\num{1e-4} \\
         Learning Rate Stop &\num{6e-9} &\num{6e-9} &\num{1e-4} \\
         Max Grad Norm &\num{10.0} &\num{1.0} &\num{0.1} \\
         Batch Size &\num{128} &\num{128} &\num{128} \\     
         \bottomrule
    \end{tabular}
\end{table}

Harvest:~Categorical module implementations:
\begin{itemize}
    \item\emph{Timestep Encoder}:
    The encoder processes two timestep components: the current observation and the previous action the policy took. 
    First the observation is passed through a two-layer fully connected neural network with hidden sizes of $[256, 256]$.
    The representation of the observation is then concatenated with the previous action (represented as a one-hot vector), and passed together through a second neural network with sizes $[256, 256]$. 
    All of the layers have ReLU~\cite{fukushima75cognitron} activations including the final layers of both networks. The final representation is the output of the timestep encoder.
    
    \item\emph{Memory Core}:
    A single-layer LSTM~\cite{hochreiter97lstm} with \SI{256}{units}.
    
    \item\emph{Policy Head}: 
    A single linear layer of size \num{8}.
    
    \item\emph{Value Head}: 
    A single linear layer of size \num{1}.
\end{itemize}

Harvest:~RGB and Running with Scissors module implementations:
\begin{itemize}
    \item\emph{Timestep Encoder}:
    The encoder processes two timestep components: the current observation and the previous action the policy took. 
    The observation is first process by a two-layer convolutional neural network with ReLU activations~\cite{fukushima75cognitron}.
    The first layer has 16 channels, a kernel with shape $[8, 8]$, and a stride of $[8, 8]$.
    The second layer has 32 channels, a kernel shape of $[4, 4]$, and a stride of $[1, 1]$.
    The output of this layer is then flattened and concatenated with a one-hot encoding of the policy's previous action.
    The resulting embedding is then passed through a two-layer fully connected neural network with hidden sizes of $[128, 128]$, and ReLU activations.
    
    \item\emph{Memory Core}:
    A single-layer LSTM~\cite{hochreiter97lstm} with \SI{128}{units}.
    
    \item\emph{Policy Head}: 
    A single linear layer of size \num{8}.
    
    \item\emph{Value Head}: 
    A single linear layer of size \num{1}.
\end{itemize}

\paragraph{Planning Parameters}
The planners have the same hyperparameters as the baseline method, but with the addition of planning-specific settings.
For all planners, an additional \num{4} arenas are used to generate planned experiences (for background planning).
The additional settings for each version of planning are as follows:
\begin{itemize}
    \item\emph{Warm-Start Background Planning}:
    An additional \num{10000} updates are performed on exclusively planned experiences before play in the real game occurs.
    
    \item\emph{Concurrent Background Planning}:
    Each mini-batch sampled after warm-starting contains \SI{25}{\percent} planned experiences, and \SI{75}{\percent} real experiences.
    
    \item\emph{Decision-Time Planning}:
    In the training arenas (those that have the real game, and are not used for evaluation), the agent selects actions with a beam-search of width \num{3} and depth \num{1}.
\end{itemize}
Background planning also requires defining a \emph{search control} procedure~\cite{sutton90dyna, sutton91dyna, sutton18book}.
Search control defines how the agent prioritizes selecting starting states and actions for background planning.
This work considers the simplest search-control method: maintain a buffer of the initial states and uniformly sample.

\subsection{World Model Implementation \& Training}
\label{app:world-model}

\subsubsection{Action-Conditioned Scheduled Sampling}
\label{app:scheduled-sampling}

\begin{wrapfigure}[11]{R}{0.46\textwidth}
\begin{algorithm}[H]
\SetAlgoNoLine
\DontPrintSemicolon
\caption{Action-Conditioned\\ Scheduled Sampling}
\label{alg:action-biased-scheduled-sampling}
$m\gets$ Initial recurrent state\;
\For{$t \in T$}{
    $\bm{o} \gets \bm{o}^t$ if $\textup{Unif}[0, 1]<\epsilon(t)$ else $\bm{\hat{o}}^t$\;
    $\bm{\hat{o}}^{t+1}, \bm{\hat{r}}^{t+1}, m\gets w(\bm{o}, \bm{a}^t, m)$\;
}
\KwOut{Predicted trajectory $(\bm{\hat{o}}^{0:T}, \bm{\hat{r}}^{0:T})$}
\end{algorithm}
\end{wrapfigure}

As noted by Talvitie~\cite{talvitie14}, rolling out trajectories with an imperfect model tends to result in compounding errors in prediction.
Their work suggests training a Markovian world model with previous predictions (referred to as ``hallucinated replay''), to train the model to correct errors.
For stateful world models, as studied in this work, it has been demonstrated that curricula of $n$-step future predictions can train a fruitful world model~\cite{michalski14nstep, oh15prediction, chiappa17}.
All of the preceding work was studying single-agent systems; therefore, they could assume a much more stable data distribution for training.
As a result, these fixed curricula style approaches may prove fatal as the data distribution may change dramatically throughout training based on the coplayers' strategies. 

Instead, this work adapts the scheduled sampling~\cite{bengio15scheduledsampling} algorithm as a stochastic curricula, which will allow both short- and long-term predictions throughout the course of training.
Scheduled sampling is an algorithm for training auto-regressive sequence prediction models where at each predictive step during training the model input is sampled from either the previous prediction or the ground truth.
Adapting this algorithm for world model rollouts requires biasing each predictive step with the true actions while sampling between the predicted successor observation and the true successor observation.
Therefore, the predictions will always be biased on true actions, but must learn to handle model-predicted observation. 
The sampling follows a schedule $\epsilon: \mathbb{Z}\to[0, 1]$ that determines the probability of sampling the true observation over the previous prediction.
When $\epsilon$ is \num{1.0}, the algorithm behaves akin to teacher forcing~\cite{williams89teacherforcing} (with the same action-conditional modification); whereas, as it approaches \num{0.0} it becomes fully auto-regressive.

\subsubsection{Implementation}

\begin{figure}[ht]
    \centering
    \includegraphics{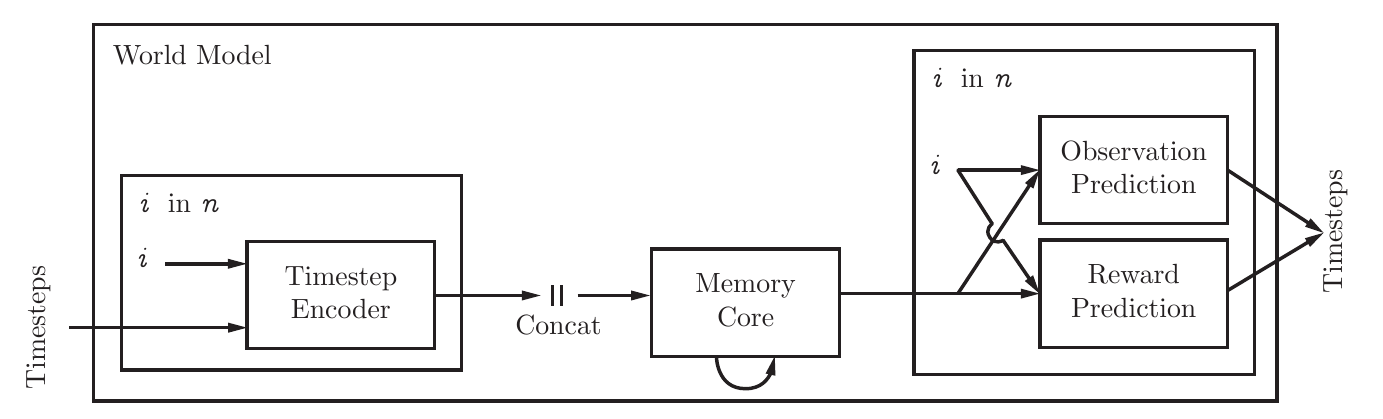}
    \caption{World Model Architecture.}
    \label{fig:world-implementation}
\end{figure}

The high-level architecture of the world model is illustrated in Figure~\ref{fig:world-implementation}.
The world model is composed of several modules that are quite similar to the policy:
\begin{itemize}    
    \item\emph{Timestep Encoder}: 
    Processes all of the current observation's information into a single embedding vector. 
    The timestep includes all new observational data that the agent gains at the current point in time.
    Different from the agent's timestep encoder, this encoder also receives the ID that corresponds with the timestep.
    
    \item\emph{Memory Core}: 
    The component of the agent that maintains and update's the agent's memory.
    Different from the agent's timestep encoder, this memory core receives the representation of each player's timestep concatenated.

    \item\emph{Observation Prediction (Head)}:
    Predicts the successor observation for each player.
    As all games considered in this work are gridworld games, the predicted observation is a classification task for each future grid cell (that are within the respective player's observation window).
    
    \item\emph{Reward Prediction (Head)}:
    Predicts the reward received for each player.
    Rewards are treated as categorical values.
\end{itemize}
Note, that the timestep encoder, observation prediction head, and reward prediction head each use the same parameters across each player.
Similar to the agent, all components are simultaneously trained and their joint parameters are referred to as $\theta^w\in\Theta^w$.
Both observation and reward losses are optimized with a cross entropy objective, and averaged across players. 
The total world model loss is as follows:
\begin{equation*}
    \mathcal{L}_w = \lambda_{\textup{observation}}\cdot\mathcal{L}_{\textup{observation}} + \lambda_{\textup{reward}}\cdot\mathcal{L}_{\textup{reward}}.
\end{equation*}
The implementation of each component is as follows:
\begin{itemize}    
    \item\emph{Timestep Encoder}: 
    The same as the agent's timestep encoder, but the player's ID is also provided alongside the action into the second neural network.
    
    \item\emph{Memory Core}: 
    Identical to the agent.

    \item\emph{Observation Prediction (Head)}:
    The observation prediction is based on the memory core's output and a one-hot ID of the predicted player's ID. 
    These inputs are concatenated and fed into an transposed version of the timestep encoder.
    
    \item\emph{Reward Prediction (Head)}:
    A linear layer of size one.
    For Harvest:~Categorical this output is handled as a discrete prediction; whereas, it is continuous for the other games.
\end{itemize}

A world model is trained for \num{1250000} updates.
Each example in the mini-batch is a sequence of \num{20} transitions, where the first \num{5} timesteps are used to burn-in the memory.
Burn-in does not occur for examples where the first \num{5} transitions are at the beginning of the episode.
Moreover, sequences are added into the replay buffer at a period of \num{14} so that all timesteps show up as prediction targets.

The world model is trained using action-conditioned scheduled sampling (Appendix~\ref{alg:action-biased-scheduled-sampling}, Algorithm~\ref{alg:action-biased-scheduled-sampling}).
The schedule $\epsilon$ follows the following schedule:
\begin{equation*}
    \epsilon (t) = \begin{cases}
        1.0 &t<250000 \\ 
        \frac{4}{3} - \frac{t}{750000} &250000\leq t\leq1000000\\
        0.0 &t>1000000. \\
    \end{cases}
\end{equation*}
This schedule starts out training as a variation of teacher forcing~\cite{williams89teacherforcing}, and slowly transitions to fully auto-regressive.
Additional hyperparameters are specified in Table~\ref{tab:world_hyperparameters}.

\begin{table}[htbp]
    \centering
    \caption{%
        World model hyperparameters per game.
    }      
    \label{tab:world_hyperparameters}
    \begin{tabular}{lrrr}
        \toprule
         Hyperparameter &Harvest:~Categorical &Harvest:~RGB &Running with Scissors \\
         \midrule
         $\lambda_{\textup{observation}}$ &1.0 &1.0 &1.0 \\
         $\lambda_{\textup{reward}}$ &10.0 &0.01 &0.01 \\
         Optimizer &Adam~\cite{kimga14adam} &Adam~\cite{kimga14adam} &Adam~\cite{kimga14adam} \\
         Learning Rate &\num{3e-4} &\num{3e-4} &\num{3e-4} \\
         Max Grad Norm &\num{10.0} &\num{10.0} &\num{10.0} \\
         Batch Size &\num{32} &\num{24} &\num{24} \\
         \bottomrule
    \end{tabular}
\end{table}

\subsection{Strategic Diversity}
\label{app:diversity}
Learning a general world model assumes that the transitions are drawn from the space of all possible transitions.
This is typically not tractable, but instead draws are taken from a dataset generated from play of a \emph{behavioral strategy} $\bm\sigma$.
And the performance of the world model is measured under a \emph{target strategy} $\bm\sigma^*$, instead of all possible strategies $\bm\Sigma$.
Differences between $\bm\sigma$ and $\bm\sigma^*$ present challenges in learning an effective world model.

We call the probability of drawing a state-action pair $\bm{s},~\bm{a}$ under some strategy its \emph{reach probability} $\eta^{\bm{\hat\sigma}}$ under joint strategy $\bm{\hat\sigma}$.
From this, we define \emph{strategic diversity} as the distribution induced from reach probabilities.
These terms allow us to observe two challenges for learning world models.

First, the diversity of the behavioral strategy \emph{cover} the target strategy's diversity:
\begin{equation}
    \eta^{\bm\sigma^*}(\bm{s},\bm{a})\to\eta^{\bm{\sigma}}(\bm{s},\bm{a}).
\end{equation}
Otherwise, transitions will be absent from the training data.
As an aside, it is possible to construct a weaker claim for coverage.
This is done through making additional assumptions about the generalization capacity of a world model across transitions.
For example, if transitions are drawn from two discrete latent variables, unseen combinations of these variables may be generalized if the individual values are known.
However, generalization cannot be generally guaranteed, so we consider coverage.

The second challenge is that the \emph{closer} the diversities are, the more accurate the learning objective will be.
In other words, we want 
\begin{equation}
    \eta^{\bm\sigma^*}(\bm{s},\bm{a})\approx\eta^{\bm{\sigma}}(\bm{s},\bm{a}).
\end{equation}
If closeness is not ensured, crucial dynamics knowledge may not be learned as the learning signal is dominated from unimportant transitions.
An example of the issue of closeness can be seen in the ``noisy TV problem,''~\cite{burda2018exploration}.
This exploration problem poses that novelty-seeking agents may be stuck forever watching the ever new TV static, and not experiencing practical novelty.
In the same vein, if a world model is trained almost entirely on ``noisy TV''-like experiences, and as a rarely on the few salient experiences, it may never learn.
Therefore, we should strive to correct the distribution of experiences to be informed by a target strategy.

By design, empirical-game building algorithms offer a means to construct the target world model objective.
These algorithms require the specification of a solution concept that serves the dual roll as the target strategy for a world model.
Then through an iterative process, the empirical-game constructs strategies that progressively approach the target.
In turn, generating transitions that match the target world model objective.

\begin{claim}
Dyna-PSRO produces a correct world-model objective $\eta^{\bm\sigma^*}$ with a best-response oracle and a correct empirical game for a game with a unique Nash Equilibrium $\bm\sigma^*$.
\end{claim}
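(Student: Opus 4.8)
The plan is to split the statement into two sub-claims and prove them in sequence: (a) under the stated idealizations the solution tracked by Dyna-PSRO converges to the game's unique Nash equilibrium $\bm\sigma^*$, so that $\mathrm{supp}(\sigma^*_i)$ is eventually contained in the restricted strategy set $\hat\Pi_i$ of every player $i$; and (b) the data Dyna-PSRO collects to estimate payoffs and to learn and evaluate responses induces a sampling distribution whose support contains --- and, after an obvious reweighting, equals --- the target reach distribution $\eta^{\bm\sigma^*}$ defined in Appendix~\ref{app:diversity}. A preliminary reduction makes this tractable: the Dyna-specific machinery (decision-time and background planning) only ever produces \emph{planned} experiences, which feed the policy learner and never the world model; and since we are handed an exact best-response oracle, the response adjoined at each iteration is an exact best response regardless of how planning accelerates its computation. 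So for the purpose of this claim, Dyna-PSRO is, on the data-collection side, just PSRO run with exact payoffs and exact responses.

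For sub-claim (a) I would invoke the classical correctness of the double-oracle method. With a ``correct empirical game'' ($\hat U \equiv U$ on the restricted set) and a best-response oracle, PSRO is exactly double oracle with the Nash solution concept: at step $k$ it solves $\hat\Gamma_k$ for a restricted-game equilibrium $\sigma_k$ and adjoins an exact best response to $\sigma_k$. The process halts precisely when the best response to $\sigma_k$ gives no player any gain over $\sigma_k$, which is exactly the statement that $\sigma_k$ is a Nash equilibrium of the full game $\Gamma$. By the uniqueness hypothesis this equilibrium must be $\bm\sigma^*$; and since $\sigma_k$ is supported on the restricted set, $\mathrm{supp}(\sigma^*_i)\subseteq\hat\Pi_i$ for every $i$ at termination, whence every pure profile in $\mathrm{supp}(\bm\sigma^*)=\prod_i\mathrm{supp}(\sigma^*_i)$ lies in $\prod_i\hat\Pi_i$.

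For sub-claim (b), the payoff-estimation phase of PSRO simulates every pure joint profile $\bm\pi\in\prod_i\hat\Pi_i$, hence in particular every $\bm\pi\in\mathrm{supp}(\bm\sigma^*)$, yielding samples from each per-profile reach distribution $\eta^{\bm\pi}$. Since $\eta^{\bm\sigma^*}(\bm s,\bm a)=\sum_{\bm\pi}\bigl(\prod_i\sigma^*_i(\pi_i)\bigr)\,\eta^{\bm\pi}(\bm s,\bm a)$ is a finite mixture over exactly these $\eta^{\bm\pi}$, the collected data covers $\mathrm{supp}(\eta^{\bm\sigma^*})$ --- the coverage requirement of Appendix~\ref{app:diversity} --- and, because solving the empirical game exactly makes the mixture weights $\prod_i\sigma^*_i(\pi_i)$ known, reweighting (or subsampling) the per-profile episode counts to match these weights reproduces $\eta^{\bm\sigma^*}$ itself --- the closeness requirement. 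Together these are what ``produces a correct world-model objective $\eta^{\bm\sigma^*}$'' should be taken to mean.

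The main obstacle will be the finiteness issue hidden in sub-claim (a): double oracle terminates in finitely many iterations only when the oracle's policy space is finite (or a compactness-plus-continuity argument is available), whereas with an unrestricted DRL policy class $\bm\Pi$ is infinite, so one must either restrict attention to a finite policy class or replace finite termination with a limiting statement ($\hat\Pi_k\uparrow$ and $\sigma_k\to\bm\sigma^*$) that needs extra regularity. Two smaller points also need care: the step from ``coverage'' to an ``exact match'' of $\eta^{\bm\sigma^*}$ relies on the per-profile structure of Dyna-PSRO's payoff-estimation data and on the known equilibrium weights; and the uniqueness hypothesis is genuinely load-bearing, since without it double oracle may converge to a different equilibrium than the intended target $\bm\sigma^*$, leaving the world-model objective ``correct'' only up to equilibrium selection.
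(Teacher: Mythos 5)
Your proposal is correct and follows the same basic skeleton as the paper's argument: invoke double-oracle convergence (the paper cites McMahan et al.) to conclude that, with an exact best-response oracle and exact payoffs, the sequence of restricted-game solutions $\bm\sigma^0,\bm\sigma^1,\ldots,\bm\sigma^e$ terminates at the unique Nash equilibrium $\bm\sigma^*$, and then conclude that the induced data distribution satisfies $\eta^{\bm\sigma^e}=\eta^{\bm\sigma^*}$. Where you diverge is in sub-claim (b): the paper simply posits that the dataset is ``composed of experiences generated by the current empirical game solution,'' i.e.\ sampled directly from $\bm\sigma^e$ at each epoch, so the chain $\eta^{\bm\sigma^0}\to\cdots\to\eta^{\bm\sigma^e}=\eta^{\bm\sigma^*}$ is immediate; you instead work from the per-profile payoff-estimation data and recover $\eta^{\bm\sigma^*}$ as a known-weight mixture of per-profile reach distributions via reweighting. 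Your route is more faithful to what the algorithm actually simulates (uniform coverage of pure profiles in the restricted set) and makes explicit a reweighting step the paper leaves implicit, at the cost of extra machinery. You also flag the finite-termination issue for infinite policy classes and the load-bearing role of uniqueness; the paper sidesteps the former with the phrase ``in the limit of enumerating the full strategy space'' and addresses the gap between these idealizations and practice only in the prose following the proof, where it retreats to the approximate statement $\eta^{\bm\sigma^e}\approx\eta^{\bm\sigma^*}$. Both of your caveats are legitimate and sharper than what the paper provides.
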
 

\begin{proof}
Following McMahan et al.~\cite{mcmahan03do}, the Double Oracle algorithm will converge to a NE in the limit of enumerating the full strategy space.
Let $\bm\sigma^0, \bm\sigma^1, \ldots, \bm\sigma^e$ be the solutions discovered for each epoch, ending at epoch $e$.
Then a dataset composed of experiences generated by the current empirical game solution evolves as follows:
\begin{equation}
    \eta^{\bm\sigma^0} \to \eta^{\bm\sigma^1} \to \ldots \to \eta^{\bm\sigma^e} = \eta^{\bm\sigma^*}.
\end{equation}
\end{proof}

The previous claim contains two strong assumptions: an exact best-response oracle and error-less empirical game.
These assumptions must be made, because PSRO is parameterized by its choice of response oracle and empirical game model; therefore, PSRO's convergence must be proven for each choice.
Theoretically PSRO has been shown to converge to an $\epsilon$-NE, where $\epsilon$ depends on the empirical game's modelling error,  to a corresponding NE in the true game~\cite{tuyls20egtabounds, vorobeychik10}.
Therefore, in practice Dyna-PSRO produces $\eta^{\bm\sigma^e} \approx \eta^{\bm\sigma^*}$, which supports the weaker claim that Dyna-PSRO generally improves the quality of a world model.

It is also worth noting the connections between this analysis and MARL regimes that seek to find any solution the game.
In these regimes, the priority is finding \emph{any} performant strategy.
This matches the approach taken by the majority of studies in MARL falling under paradigms such as Independent RL or Self-Play.
Therefore, their target distribution is the best-response to the previous strategy $\eta^{\textup{BR}(\bm\sigma^{i-1})}$ and changes in tandem with the strategies.
When no best-response can be found, then the current strategy matches the solution and the dataset correspondingly reflects this.

\subsection{Dyna-PSRO}
\label{app:dyna-psro}
The Dyna-PSRO builds upon PSRO (Algorithm~\ref{alg:psro}) by including the co-learning of a world model.
The high-level pseudocode of Dyna-PSRO is provided in Algorithm~\ref{alg:dyna-psro} an a high-level application architecture diagram is depicted in Figure~\ref{fig:dyna-psro-processes}.
There are three main co-routines of Dyna-PSRO: response computation, world-model learning, and empirical-game simulation. 
The details of each routine are first provided; then, how the routines interact with each other is explained.

\begin{figure}[!ht]
    \centering
    \includegraphics{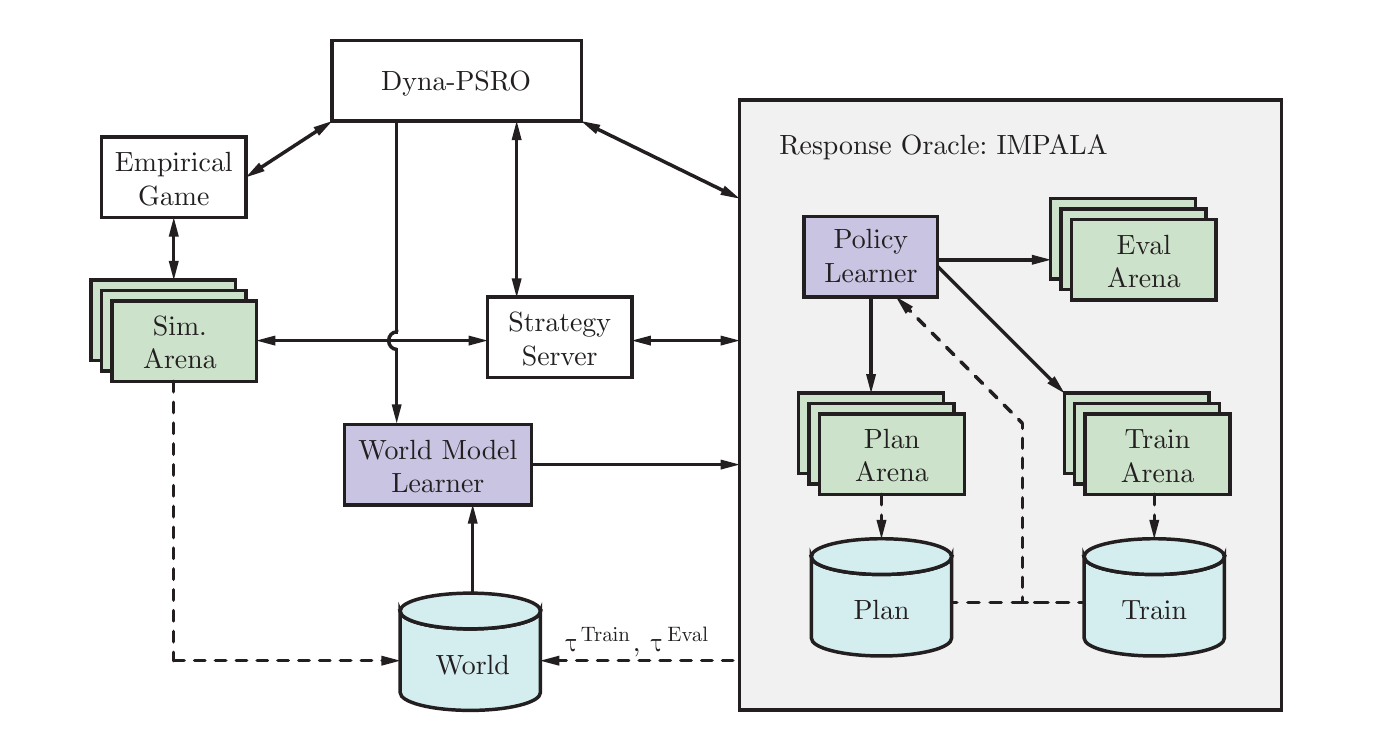}
    \caption{%
        Overview of the major Dyna-PSRO processes. 
    }
    \label{fig:dyna-psro-processes}
\end{figure}

\subsubsection{Empirical Game}
The empirical game routine is responsible for maintaining the empirical game, including simulating new payoffs and game reasoning.
New profiles are sent to \emph{simulation (sim.) arenas} for payoff estimation in parallel.
Once all profiles are estimated, the game is solved, and the solution is based to the main Dyna-PSRO process.
In the experiments in this work, the chosen solution is Nash Equilibrium, and it is solved through the linear complementarity~\cite{eaves71lcp} algorithm that is implemented by Gambit~\cite{gambit}.

\subsubsection{World Model}
The world model routine is responsible for training the world model and serving its parameters. 
This routine's pseudocode is provided in Algorithm~\ref{alg:world-model-learner}, and follows mostly the same method details as the strategic diversity experiment. 
The difference is that instead of there being a precomputed fixed dataset, the world model is now trained over a dynamic dataset.
The dataset is represented by a replay buffer that is populated from: (1) trajectories from the simulation arena used for expanding the empirical game, and (2) trajectories from the training and evaluation arenas from the response calculation. 
Notably, all of this data must be generated in the standard PSRO procedure, so it collected with no additional cost.
The world-model learner samples and evicts data randomly from this buffer.

\begin{algorithm}[H]
\caption{World Model Learner}
\label{alg:world-model-learner}
\KwIn{World model $w$ and data buffer $\mathcal{B}^w$}
\KwIn{$n$ no. of updates (default: $\infty)$.}

\For{$i\in[[n]]$}{
    Train $w$ over $\tau\sim\mathcal{B}^w$\;
}

\KwOut{$w$}
\end{algorithm}

\subsubsection{Response Oracle}
The response oracle uses the IMPALA~\cite{espeholt18impala} algorithm to compute an approximate best-response to the opponent's strategy according the the current empirical game.
IMPALA uses several processes that generate experiences for the agent to train on. 
These process are referred to in this work as arenas.
The \emph{train arenas} generate real experiences, and the \emph{plan arenas} generate planned experiences.
If the learner is using decision-time planning they will only use it in the train arenas.
A third set of arenas called \emph{eval arenas} periodically evaluate the performance of the greedy policy and record additional metrics.
The arenas attempt to synchronize all parameters at the start of each episode.

The policy learner runs for a fixed number of updates, querying the datastores for experiences to learn from. 
The specifics of how each policy learns is described in Appendix~\ref{app:policy}.

\subsubsection{Runtime Procedure}
A sketch of the respective processes runtime is shown in Figure~\ref{fig:dyna-psro-runtime}
As in PSRO, the main empirical-game building loop iterates between response computation and empirical-game simulation.

\begin{figure}[!ht]
    \centering
    \includegraphics{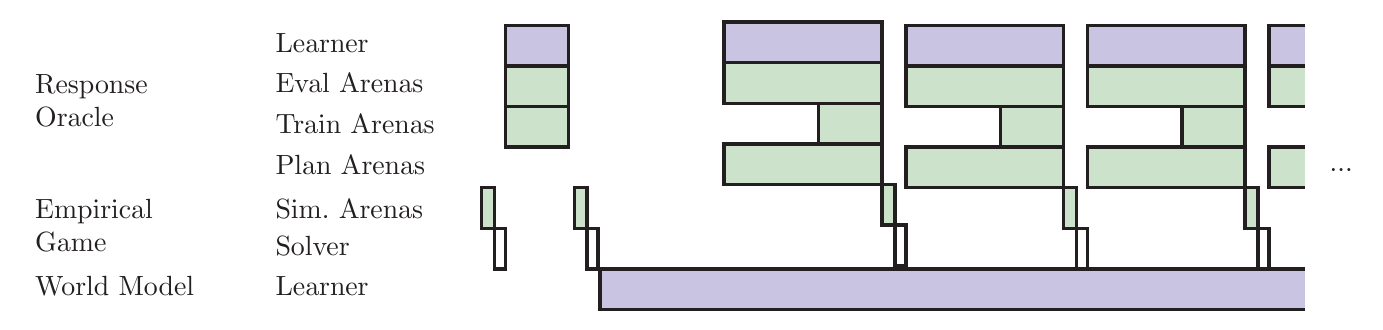}
    \caption{%
        Example Dyna-PSRO runtime.
        Planning is set to occur after the first epoch.
        Each players' response oracle runs in parallel.
    }
    \label{fig:dyna-psro-runtime}
\end{figure}

The runtime is defined by a parameter specifying on which PSRO epoch to begin planning.
Before that epoch, the response oracles do not use planning at all, because the world models are untrained.
All of these policies therefore are trained exclusively on real experiences just like standard PSRO.
However, these experiences are also being used to populate the world model's replay buffer.
Once the first planning epoch has arrived, computing responses is temporarily paused.
The world model is then given a set number of updates to warm-start its parameters, before being used in response calculation.
Once the world model's warm-start phase is over, all process proceed concurrently.

Throughout this work planning begins on the second epoch.
The world models is given 1 million updates of warm starting.

\begin{minipage}[h]{0.55\textwidth}
\begin{algorithm}[H]
\caption{Policy-Space  Response Oracles~\cite{lanctot17psro}}
\label{alg:psro}
\KwIn{Initial strategy sets for all players $\bm\Pi^0$} 

Simulate utilities $\hat{U}^{\bm\Pi^0}$ for each joint $\bm\pi^0\in\bm\Pi^0$\;
Initialize solution $\sigma_i^{*,0} = \text{Uniform}(\Pi_i^{0})$\;

\While{epoch e in $\{1, 2,\dotsc \}$}{
    \For{player $i\in [[n]]$}{
        \tcp*[h]{Algorithm~\ref{alg:response-oracle}.}
        $\pi_i^e,\_ = \textup{response\_oracle}(\sigma_{-i}^{*,e-1})$\;
        $\Pi^e_i = \Pi^{e-1}_i \cup \left\{ \pi^e_i \right\}$\;
    }
    Simulate missing entries in $\hat{U}^{\Pi^e}$ from $\bm\Pi^e$\;
    Compute a solution $\sigma^{*,e}$ from $\hat{\Gamma}^e$\;
}

\KwOut{Current solution $\sigma_i^{*,e}$ for player $i$}
\end{algorithm}
\end{minipage}
\begin{minipage}[h]{0.45\textwidth}
\begin{algorithm}[H]
\caption{Response Oracle}
\label{alg:response-oracle}
\KwIn{Coplayer strategy profile $\sigma_{-i}$}
\KwIn{Num updates $k$}
$\pi_i\gets\theta^\pi$\;
$\mathcal{B}\gets \left\{\right\}$ \tcp*{Replay Buffer.}
\For{many async episodes}{
    $\pi_{-i}\sim\sigma_{-i}$\;
    $\mathcal{B} = \mathcal{B} \cup \left\{ \tau \sim (\pi_i,\pi_{-i}) \right\}$\;
}
\For{$i\in[[k]]$}{
    Train $\pi_i$ over $\tau\sim\mathcal{B}$\;
}
\KwOut{$\pi_i,\mathcal{B}$}
\end{algorithm}
\end{minipage}

\begin{algorithm}
\caption{Dyna-PSRO}
\label{alg:dyna-psro}
\KwIn{Initial strategy sets for all players $\bm\Pi^0$}
\KwIn{No. of world model head-start updates $n_w$}
\KwIn{Epoch to begin planning $e^\textup{plan}$}

Simulate utilities $\hat{U}^{\bm\Pi^0}$ for each joint $\bm\pi^0\in\bm\Pi^0$\;
Initialize solution $\sigma_i^{*,0} = \text{Uniform}(\Pi_i^{0})$\;
$w\gets\theta^w$\;
$\mathcal{B}^w\gets\left\{\right\}$ \tcp*{World Model's Replay Buffer.} \;

\While{epoch e in $\{1, 2,\dotsc \}$}{
    \For{player $i\in [[n]]$}{
        \If{$e > e^\textup{plan}$}{
            $\pi_i^e ,\tau = \textup{async}(\textup{planner\_oracle}(\sigma_{-i}^{*,e-1},w))$ \tcp*{Algorithm~\ref{alg:planner-oracle}.}
        }
        \Else{
            $\pi_i^e ,\tau = \textup{async}(\textup{response\_oracle}(\sigma_{-i}^{*,e-1}))$ \tcp*{Algorithm~\ref{alg:response-oracle}.}     
        }
        $\mathcal{B}^w = \mathcal{B}^w \cup \left\{ \tau \right\}$\;
        $\Pi^e_i = \Pi^{e-1}_i \cup \left\{ \pi^e_i \right\}$\;
    }
    Wait on all futures $\bm\pi^e,\tau$\;\;
    
    Simulate missing entries in $\hat{U}^{\bm\Pi^e}$ from $\bm\Pi^e$\;
    Add $\tau$ from simulation to $\mathcal{B}^w$\;
    Compute a solution $\bm\sigma^{*,e}$ from $\hat{\Gamma}^e$\;\;

    \If{e == 1}{
        $w = \textup{world\_model\_learner}(w,n_w)$ \tcp*{Algorithm~\ref{alg:world-model-learner}.}
        $w = \textup{async}(\textup{world\_model\_learner}(w))$ \tcp*{Parameters periodically sync.}
    }
}

\KwOut{Current solution $\sigma_i^{*,e}$ for player $i$}
\end{algorithm}

\begin{algorithm}[H]
\caption{Planner Oracle}
\label{alg:planner-oracle}
\KwIn{Coplayer strategy profile $\sigma_{-i}$}
\KwIn{World model $w$, real game dynamics $p$}
\KwIn{Warm-start background planning updates $n^\textup{BG:WS}$}
\KwIn{Training updates $n$}
\KwIn{Concurrent background planning fraction $f^\textup{BG:C}$}
$\pi_i\gets\theta^\pi$\;
$\mathcal{B}^\textup{plan}\gets \left\{\right\}$ \tcp*{Replay Buffer with planned experience.}
$\mathcal{B}^\textup{train}\gets \left\{\right\}$ \tcp*{Replay Buffer with real experience.}
\;
\tcp*[h]{Asynchronously generate data on arenas.}\;
\For{many async episodes}{
    $\pi_{-i}\sim\sigma_{-i}$\;
    $\mathcal{B}^\textup{plan} = \mathcal{B}^\textup{plan} \cup \left\{ \tau \sim (\pi_i,\pi_{-i},w) \right\}$\;
}
\For{many async episodes}{
    $\pi_{-i}\sim\sigma_{-i}$\;
    $\mathcal{B}^\textup{train} = \mathcal{B}^\textup{train} \cup \left\{ \tau \sim (\pi_i,\pi_{-i},p) \right\}$\;
}
\;
\tcp*[h]{Train the response policy.}\;
\For{$i\in[[n^\textup{BG:WS}]]$}{
    Train $\pi_i$ over $\tau\sim \mathcal{B}^\textup{plan}$\;
}
\For{$i\in[[n]]$}{
    Train $\pi_i$ over $\tau\sim \left\{(1.0 - f^\textup{BG:C})\cdot\mathcal{B}^\textup{train}\right\} \cup \left\{f^\textup{BG:C}\cdot\mathcal{B}^\textup{plan}\right\}$\;
}
\KwOut{$\pi_i,\mathcal{B}$}
\end{algorithm}

\subsection{Combined-Game Regret}
\label{app:regret}
Combined-game regret is an approximate measure of regret that all available estimates to approximate the regret within the true game.
Intuitively, combined-game regret is the regret of a strategy with respect to all discovered policies.
When comparing empirical-game building algorithms this is formalized as follows:
\begin{equation}
    \text{SumRegret}(\bm{\sigma}, \overline{\bm{\Pi}})=\sum_{i\in n}\max_{\pi_i\in\overline\Pi_i}\hat{U}_i(\pi_i, \sigma_{-i}) - \hat{U}_i(\sigma_i, \sigma_{-i}), \qquad \overline{\bm{\Pi}}_i\equiv\bigcup_\textup{method}\hat{\bm{\Pi}}_i^\textup{method},
\end{equation}
where $\hat{\Pi}$ is the restricted strategy set from one of the algorithms.

\begin{figure}[!ht]
    \centering
    \includegraphics{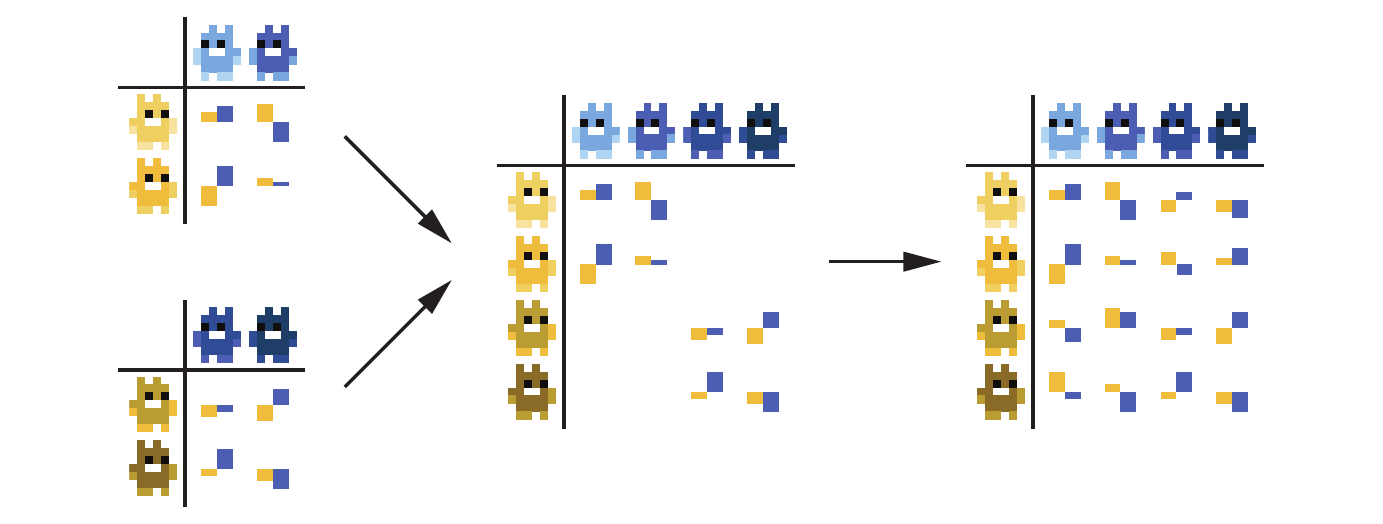}
    \caption{%
        Combined-game construction.
        Left: Constituent empirical games.
        Middle: Combination of the strategy sets and payoff functions.
        Right: Completion of the empirical game by estimating new strategy profile payoffs. 
    }
    \label{fig:combined-game}
\end{figure}

The process of constructing a combined-game is illustrated in Figure~\ref{fig:combined-game}.
Where, the strategy sets (depicted by the toons) across methods are first combined.
The new \emph{combined game} that results from this can be initialized with the payoff estimates from the constituent empirical games.
Unestimated payoffs must then be simulated for the new strategy profiles.
Then the complete combined game can be used to compute the combined-game regret from the solutions computed throughout the empirical-game building algorithms.

\clearpage
\section{Games}
\label{app:games}

\subsection{Harvest: Categorical}
\label{app:games:harvest-categorical}
In Harvest, players move around an orchard picking apples. 
The challenging commons element is that apple regrowth rate is proportional to nearby apples, so that socially optimum behavior would entail managed harvesting. 
Self-interested agents capture only part of the benefit of optimal growth, thus non-cooperative equilibria tend to exhibit collective over-harvesting. 
The game has established roots in human-behavioral studies~\cite{janssen10commons} and in agent-based modeling of emergent behavior~\cite{perolat17commons, leibo17, leibo2021meltingpot}.

\begin{figure}[htbp]
    \centering
    \includegraphics{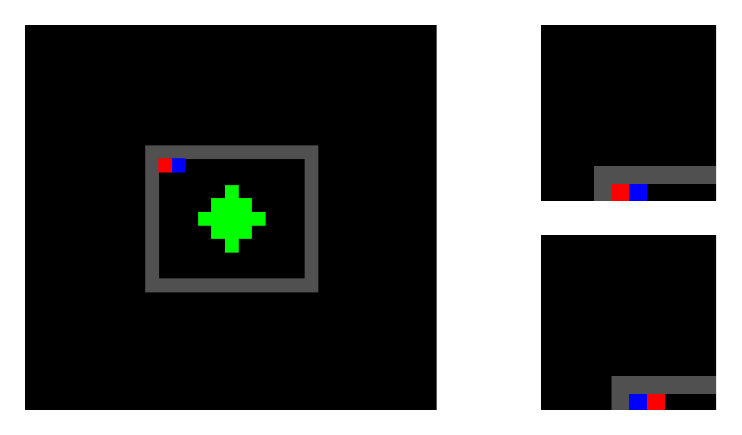}
    \caption{%
        Harvest:~Categorical.
        Left: game state.
        Right: player observations.
    }
    \label{fig:gathering}
\end{figure}

For our initial experiments, we use a symmetric two-player version of the game, where in-game entities are represented categorically~\cite{hcai19gathering}.
This categorical representation facilitates faster experimentation and simplifies the interpretation of results. 
Figure~\ref{fig:gathering} depicts the game state and player observations.
Each player has a $10\times10$ viewbox within their field of vision. 
The cells of the grid world can be occupied by either agent shown in red and blue, the apples shown in green, or a wall in gray.
The possible actions include moving in the four cardinal directions, rotating either way, tagging, or remaining idle.
A successful tag temporarily removes the other player from the game, but can only be done to other nearby players.
Players receive a reward of \num{1} for each apple picked.
Episodes are limited to \num{100} timesteps.

\subsection{Harvest: RGB}
\label{app:games:harvest-rgb}
Harvest:~RGB is a different implementation of the Harvest game introduced by Harvest:~Categorical (Appendix~\ref{app:games:harvest-categorical}.
Harvest:~RGB is exactly the harvest implementation from MeltingPot~\cite{leibo2021meltingpot} with the same orchard map.
A rendering of the game state and observations is shown in Figure~\ref{fig:harvest}.
The main difference between the Harvest versions is that the observations are $88\times 88\times 3$ images of the $11\times 11$ viewbox in front of them.
There are also minor differences in the implementation of tagging and apple respawn mechanism.
Episodes play for \num{1000} timesteps.

\begin{figure}[htbp]
    \centering
    \includegraphics{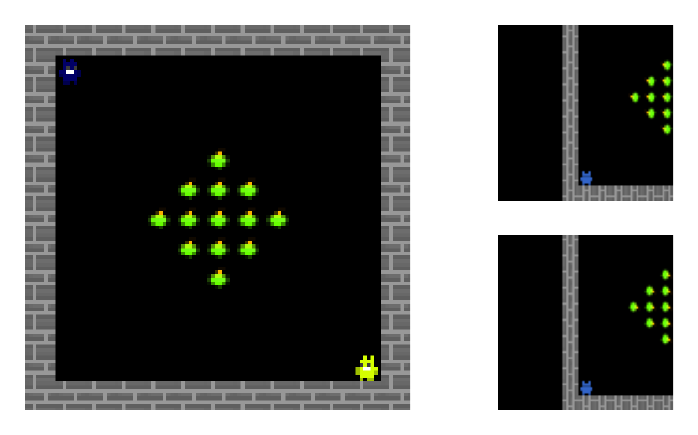}
    \caption{%
        Harvest:~RGB.
        Left: game state.
        Right: player observations.
    }
    \label{fig:harvest}
\end{figure}

\subsection{Running With Scissors}
\label{app:games:running-with-scissors}
Running With Scissors (RWS) is a temporally extended version of rock-paper-scissors (RPS).
In it, players collect rock, paper, and scissor items into their inventory.
At any point the player has the option to tag their opponent if they're nearby.
Then they play RPS corresponding to the distribution of items in their inventories.
The agents have the same action space as in the previous games.
The observation space is $40\times 40\times 3$ image-based viewbox in fromt of them corresponding to a $6\times 6$ grid around them.
A portion of items are placed within the game deterministically, the rest are randomly sampled before play.
If neither player tags each other before \num{1000} timesteps, the players are forced into playing RPS. 

\begin{figure}[htbp]
    \centering
    \includegraphics{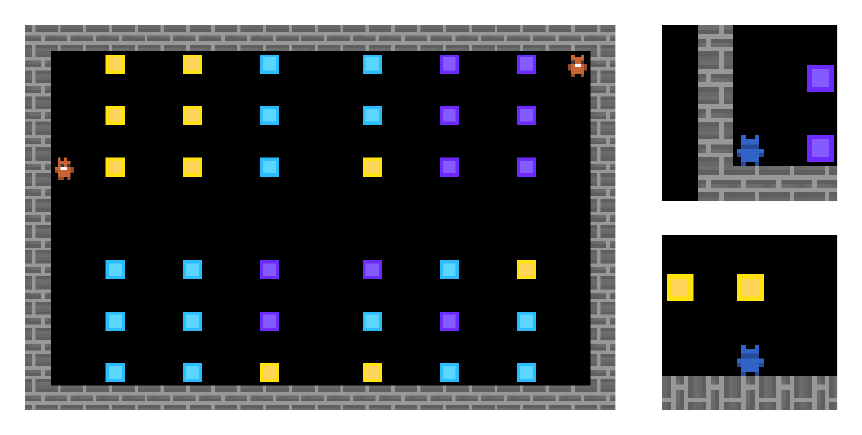}
    \caption{%
        Running With Scissors.
        Left: game state.
        Right: player observations.        
    }
    \label{fig:running-with-scissors}
\end{figure}

\clearpage
\section{Additional Results}
\label{app:results}

\subsection{Strategic Diversity}
\label{app:results:diversity}
Figure~\ref{fig:gathering_recall} displays the recall results that correspond to the accuracies portrayed in Figure~\ref{fig:world-model-comparison-accuracy}.
See Section~\ref{sec:diversity} for a discussion of these results.

\begin{figure}[h]
    \centering
    \includegraphics{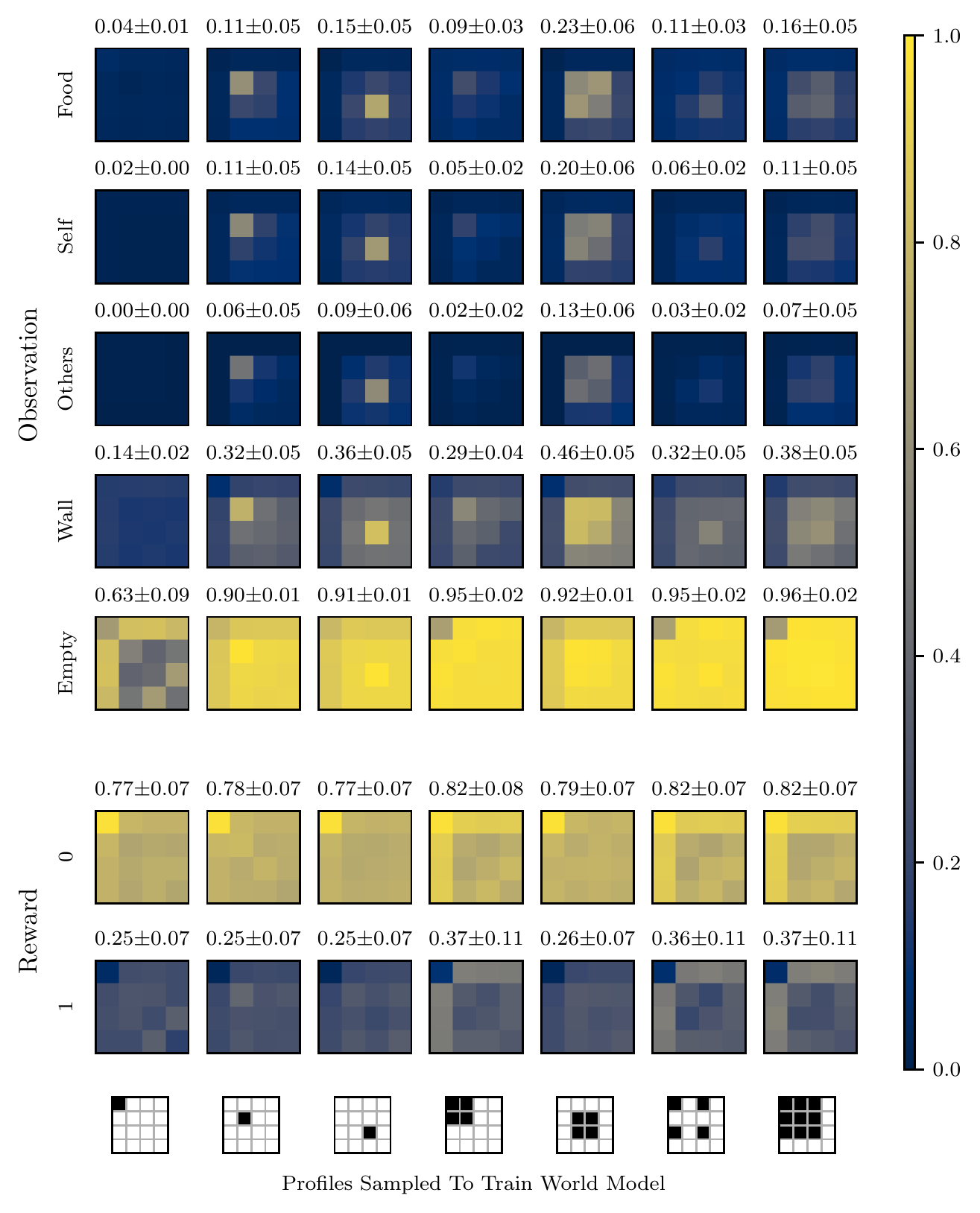}
    \caption{%
        World model recall on Harvest:~Categorical.
    }
    \label{fig:gathering_recall}
\end{figure}

\clearpage
\subsection{Background Planning}
\label{app:results:background-planning}
Figure~\ref{fig:background-planning-bad} shows the results of repeating the background planning experiment with world model \wmA\!\!.
Besides changing the world model, the methodology is consistent with that described in Section~\ref{sec:background-planning}.
This result shows the planner achieving results comparable to the baseline method.
Supporting the adoption of planning, as it tends to not negatively impact the learning process.

\begin{figure}[h]
    \centering
    \includegraphics{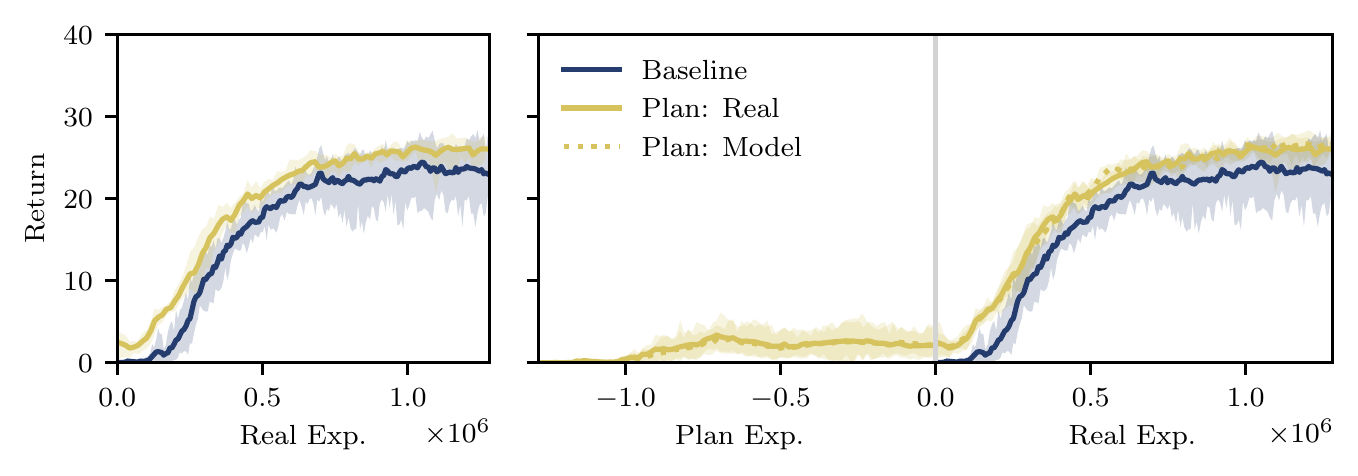}
    \caption{%
        Effects of background planning on response computation using world model \wmA\!\!.\\
        (\num{5}~seeds, with \SI{95}{\percent} bootstrapped CI).        
    }
    \label{fig:background-planning-bad}
\end{figure}

\subsection{Decision-Time Planning}
\label{app:results:decision-time-planning}
Figure~\ref{fig:decision-time-planning-bad} shows the results of repeating the decision-time planning experiment with world model \wmA\!\!.
Besides changing the world model, the methodology is consistent with that described in Section~\ref{sec:decision-time-planning}.
This result further exemplifies the trend shown in Figure~\ref{fig:decision-time-planning}, where the planners that did not use BG:~W failed to learn an effective policy.
The planner that used BG:~W achieved performance comparable to the baseline.
Finally, the planner that used both BG:~W and BG:~C achieves the strongest performance at \num{33.07\pm6.76}.
These results support the benefit of BG:~W when using DT, and that effective planning performs as least as good as the baseline.

\begin{figure}[h]
    \centering
    \includegraphics{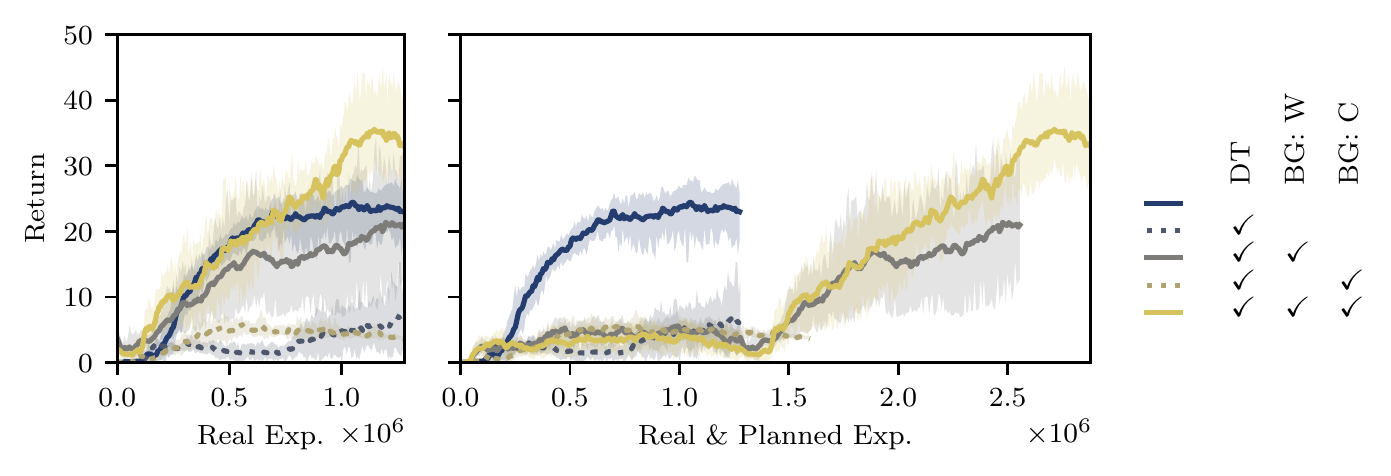}
    \caption{%
        Effects of decision-time planning on response computation using world model \wmA\!\!.
        (\num{5}~seeds, with \SI{95}{\percent} bootstrapped CI).
    }
    \label{fig:decision-time-planning-bad}
\end{figure}

\subsection{World Models as Empirical Games}
\label{app:results:world-model-enfg}
In this section, we verify the need for separate models.

First, consider if an empirical game can substitute for a world model.
The majority of previous work on empirical games represents the model in the normal form. 
This representation abstracts away any notion of dynamics within an episode into a choice in policy and the resulting payoff. 
Since empirical games currently lack dynamics information completely, this supports the choice of separate models.
This is not without any exceptions.
If the original game is one-shot and stateless (i.e., an episode is played through a single action), then a normal-form empirical game is exactly a world model.

Now, consider if a world model can substitute for an empirical game.
World models predict successor states and rewards; and thus, can rollout planned trajectories to estimate payoffs. 
Note, that rolling out a trajectory a trajectory with a world model is an auto-regressive prediction that tends to result in compounding errors~\cite{talvitie14, holland18}.
Despite this, it is plausible that a world model can substitute as a high-fidelity empirical game.

\begin{figure}[htbp]
    \centering
    \includegraphics{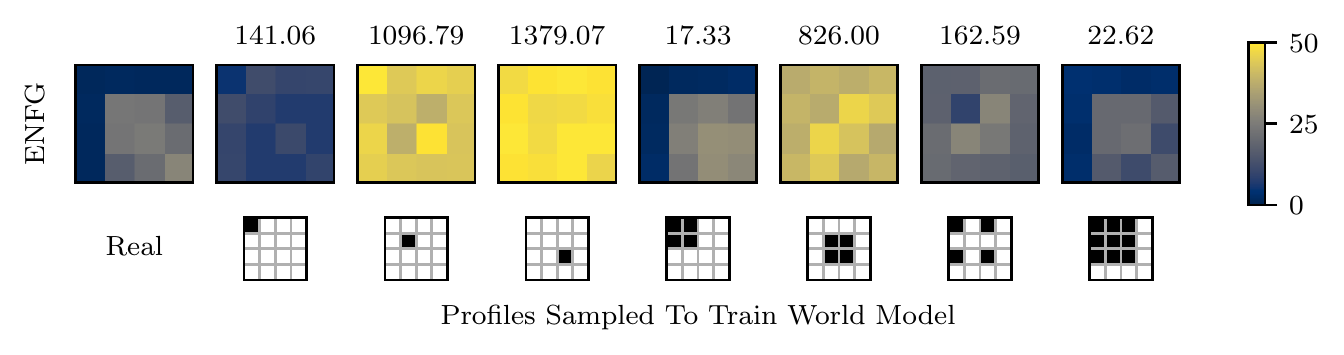}
    \caption{%
        Empirical normal-form games (ENFG) estimated by world model rollouts. 
        The title of each plot is its L2 distance with the real ENFG. 
    }
    \label{fig:world-model-enfg}    
\end{figure}

Figure~\ref{fig:world-model-enfg} compares an empirical game estimated from real game payouts empirical games estimated with payouts predicted by a world model.
In this experiment, the world models are the same that were used in Section~\ref{sec:diversity}.
In general, the empirical games estimated by world models have large errors (L2 \num{>100}), with several having exceptionally large errors (L2 \num{>1000}).
These result suggest that this direction may be possible with future algorithmic improvements; however, currently, the prediction errors are too large to substitute empirical games with world models.
Especially in games with long time horizons.


\end{document}